\author{Andrew Cross}
\affiliation{IBM Quantum, IBM T.J. Watson Research Center, Yorktown Heights, NY, United States.} 
\author{Zhiyang He (Sunny)}
\email{szhe@mit.edu}
\affiliation{IBM Quantum, IBM T.J. Watson Research Center, Yorktown Heights, NY, United States.} 
\affiliation{Department of Mathematics, Massachusetts Institute of Technology.}
\author{Anand Natarajan}
\affiliation{Computer Science and Artificial Intelligence Laboratory, Massachusetts Institute of Technology.} 
\author{Mario Szegedy}
\affiliation{Rutgers, The State University of New Jersey, New Brunswick, NJ, United States.}  
\author{Guanyu Zhu}
\affiliation{IBM Quantum, IBM T.J. Watson Research Center, Yorktown Heights, NY, United States.}
\title{Quantum Locally Testable Code with Constant Soundness}
\date{}
\newtheorem{theorem}{Theorem}[section]
\newtheorem{claim}[theorem]{Claim}
\newtheorem{fact}[theorem]{Fact}
\newtheorem{definition}[theorem]{Definition}
\newtheorem{lemma}[theorem]{Lemma}
\newtheorem{corollary}[theorem]{Corollary}
\DeclareMathOperator{\Span}{Span}
\DeclareMathOperator{\css}{CSS}
\DeclareMathOperator{\im}{im}
\DeclareMathOperator{\avg}{avg}
\newcommand*{\p}{\partial}
\newcommand*{\one}{\mathbbm{1}}
\newcommand*{\FF}{\mathbb{F}}
\newcommand*{\CC}{\mathbb{C}}
\newcommand*{\ol}{\bar}   
\newcommand*{\sm}{\setminus}
\begin{document}

\maketitle
\begin{abstract}
In this paper, we present two constructions of QLTCs with constant soundness. In the first approach, we introduce an operation which we call check product, and show how this operation gives rise to QLTCs of constant soundness, constant rate, and distance scaling with locality. In the second approach, we utilize homological product of codes and prove a special case in which the soundness of component codes is preserved through the homological product. This observation leads us to construct QLTCs of constant soundness, scalable rate and distance, and constant average locality. Our work marks a step towards constructing QLTCs of high soundness and distance, which would give a different construction to the No Low-Energy Trivial States (NLTS) theorem.
\end{abstract}

\section{Introduction}
Quantum error correcting codes (QECCs) are essential objects of study in quantum information science due to their broad applications in both practical and theoretical quantum computation. Practically, QECC is the foundation of fault-tolerant quantum computation, which holds the promise to scalable quantum computing. Theoretically, quantum coding theory interacts extensive with quantum complexity theory and information theory, much like how classical coding theory interacts with theoretical computer science. In development of quantum codes for theoretical purposes, we often focus on their asymptotic properties, such as relative rate and distance. In this paper, we study the \emph{testability} of quantum codes and present two novel constructions.

A classical code $C$ is called testable if the syndrome of a proposed code word $w$ reveals more than whether $w$ belongs to the code: the relative weight of the syndrome is also proportional to the relative distance of $w$ from the codespace, and their ratio is called the \emph{soundness} of $C$. A code is further called \emph{locally testable} if all of its checks involves at most a constant number of bits from $w$. The theory of code checking, which began with the pioneering work of Blum, Luby, and Rubinfeld~\cite{blum1990self}, has grown into a widely successful area of the theory of computing, affecting PCP theory, combinatorial optimization, combinatorial property testing, program checking and even cryptography.

With the advancement of quantum information science, a quantum notion of locally testable codes (QLTC) was first proposed and studied by Aharonov and Eldar in~\cite{aharonov2015quantum}. The existence of such codes gained major interest in 2015, when Eldar and Harrow showed in~\cite{eldar2015local} that any QLTC with constant soundness, locality and relative distance could be used to construct Hamiltonians with no low-energy trivial states, which would resolve the famous NLTS conjecture
of Freedman and Hastings~\cite{freedman2013quantum}. This conjecture is deeply related to the quantum PCP conjecture~\cite{aharonov2013guest}, one of the most important open problems in quantum complexity theory.
Further, classical LTCs are important components in the proofs of the classical PCP Theorem~\cite{arora1998proof, dinur2007pcp}. Naturally, the connections between QLTCs and the qPCP conjecture became a topic of major importance in the field.

However, constructing QLTCs of high soundness and distance seemed far-fetched at the time, as the best known quantum LDPC codes (QLDPC) had distance $\tilde{\Theta}(\sqrt{N})$ and no bound on soundness. Here we say that a (quantum) code is LDPC if the number of (qu)bits each check involves 
and the number of checks each (qu)bit is involved in 
are all bounded by a constant $\ell$, which we call the \emph{locality} of the code. The first QLTC with unconditional guarantee on soundness is the hypersphere product code constructed by Hastings~\cite{hastings2016quantum}, which encodes two logical qubits, has soundness $1/\log(N)^2$, locality $\Theta(\log(N))$ and distance $\Theta(\sqrt{N})$. In 2019, Leverrier, Londe and Z\'{e}mor constructed another family of QLTC called the hemicubic codes~\cite{leverrier2022towards}, which encodes one qubit and has an improved soundness of $1/\log(N)$, with roughly the same locality and distance. These constructions remain the only known QLTCs, and it was unclear how to improve the soundness to $\Omega(1)$, without reducing the code to have no encoded qubits. 

The landscape of QLDPC constructions changed significantly in 2020, following Hastings, Haah and O'Donnell's breakthrough construction of fiber bundle codes~\cite{hastings2021fiber} that achieved $\tilde{\Omega}(N^{3/5})$ distance. Their ideas were soon generalized to lifted product codes~\cite{panteleev2021quantum} by Panteleev and Kalachev, and balanced product codes~\cite{breuckmann2021balanced} by Breuckmann and Eberhardt, culminating in the 2021 result of Panteleev and Kalachev that presented the first family of asymptotically good qLPDC codes~\cite{panteleev2022asymptotically}. It was further shown that embedded in the construction in~\cite{panteleev2022asymptotically} is a $c^3$-LTC: asymptotically good classical LDPC code (CLDPC) that is locally testable with constant soundness. Around the same time, Dinur, Evra, Livne, Lubotzky and Mozes announced their construction of a $c^3$-LTC using left-right Cayley complexes~\cite{dinur2022locally}. Building upon their works, two other families of asymptotically good QLDPCs are constructed~\cite{leverrier2022quantum, dinur2022good}. One such family, namely quantum Tanner codes~\cite{leverrier2022quantum} by Leverrier and Z\'emor, was utilized by Anshu, Breuckmann and Nirkhe~\cite{anshu2022nlts} to prove the NLTS conjecture. Following these developments, the problem of constructing QLTCs with good parameters became one of major interest. 


In this paper, we take a concrete step towards this open problem by constructing the first few families of QLTCs with constant soundness and varying distance.

\subsection{Main Results}

We begin by presenting a simple idea that transforms a good classical LDPC locally testable code with constant soundness into a quantum LDPC locally testable code with constant soundness and constant rate. However, it has the striking deficit of having distance 2. 
\begin{lemma}\label{thm:dist2}
Given a family of classical LDPC codes with parameters $[n, k, d]$ that are locally testable with soundness $\rho$, there exists a family of quantum LDPC codes that are locally testable with soundness $2\rho$ and parameters $[2n, 2k, 2]$. 
\end{lemma}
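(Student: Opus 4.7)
My plan is to realize the quantum code as a CSS construction on $2n$ qubits, pairing qubit $i$ with qubit $i+n$, with $H_X = H_Z = (H \mid H)$ where $H \in \FF_2^{m \times n}$ is the parity-check matrix of the classical LTC. Concretely, for each classical check $h_j$ supported on $S_j \subseteq [n]$ we introduce one $X$-check $\prod_{i \in S_j} X_i X_{i+n}$ and one $Z$-check $\prod_{i \in S_j} Z_i Z_{i+n}$; the CSS condition $H_X H_Z^T = HH^T + HH^T \equiv 0 \pmod{2}$ is immediate, and the LDPC property is inherited from $H$ with check weight at most doubled.

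Parameter counting is routine: both $H_X$ and $H_Z$ have the same rank $n-k$ as $H$ (duplicating columns does not change column rank), so the code encodes $2n - 2(n-k) = 2k$ logical qubits. For the distance, single-qubit Paulis $X_i$ and $Z_i$ have nonzero syndrome $He_i$ (no column of $H$ is zero, since every bit sits in some check), so no weight-$1$ logicals exist; on the other hand $(e_i, e_i) \in \ker H_Z$ lies generically outside $\mathrm{rowsp}(H_X) = \{(h, h) : h \in \mathrm{rowsp}(H)\}$, giving weight-$2$ logical operators $X_i X_{i+n}$ and $Z_i Z_{i+n}$. Hence the distance is exactly $2$.

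For soundness, the key observation is that for $a = (a_1, a_2) \in \FF_2^n \oplus \FF_2^n$ we have $H_Z a = H(a_1 + a_2)$, and the Hamming distance $d(a, \ker H_Z)$ equals $d(a_1 + a_2, C)$: flipping a single coordinate in either block toggles exactly one bit of $a_1 + a_2$, so the minimum flips needed to move $a_1 + a_2$ into $C$ is achieved by modifying a single block. Plugging $w = a_1 + a_2$ into the classical LTC hypothesis yields
\[
\frac{|H_Z a|}{m} \;=\; \frac{|H(a_1+a_2)|}{m} \;\geq\; \rho \cdot \frac{d(a_1+a_2, C)}{n} \;=\; 2\rho \cdot \frac{d(a, \ker H_Z)}{2n},
\]
and an identical computation handles $H_X$. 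By the standard reduction of QLTC soundness for CSS codes to the classical soundness of the underlying $\FF_2$-codes (as in \cite{aharonov2015quantum, eldar2015local}), this promotes to QLTC soundness $2\rho$ for the full quantum code. I do not anticipate any serious obstacle; the only delicate point is carefully tracking the normalization from $n$ to $2n$, which is precisely where the claimed factor of two arises.
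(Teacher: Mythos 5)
Your proposal is correct and follows essentially the same route as the paper: the paper also takes $Q = \css([H,H],[H,H])$, establishes classical soundness $2\rho$ via the identity $d\bigl((x,y),\ker[H,H]\bigr) = d(x+y,C)$ together with $|[H,H](x,y)| = |H(x+y)|$, and then lifts this to QLTC soundness using Fact~17 of Eldar--Harrow, with the same dimension count $2n-2(n-k)=2k$. The only cosmetic difference is at the distance bound: where you say $(e_i,e_i)$ is ``generically'' not a stabilizer, the paper argues precisely that if every $(e_i,e_i)$ lay in the row space of $[H,H]$ then $H$ would have rank $n$ and $C$ would be trivial, so for $k\ge 1$ some weight-$2$ operator $X_iX_{i+n}$ (and $Z_iZ_{i+n}$) is a genuine logical, a one-line tightening of your argument.
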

It is important to note that the soundness condition is nontrivial, because it concerns the distance of an \emph{arbitrary} state from the codespace, which can be much larger than the code distance, which is the distance \emph{between} two codewords. For instance, if $C$ is a classical LTC over $n$ bits, then the code $\bar{C} = \{(x, y) \in \mathbb{F}_2^{2n} : x + y \in C\}$ has distance 2 (since it contains $(e_i, e_i)$ for all $i$), yet if $x$ is far from $C$, then $(x, 0)$ is far from $\bar{C}$. Our quantum construction is in fact based on this simple classical example.
We present this construction in section~\ref{sec:cp} and discuss how it could be generalized to a new operation which we give the name \emph{check product}. Using this operation and random quantum codes, we obtain the first main result of our paper, a family of QLTCs with constant soundness, constant rate, and distance scaling with locality. 

\begin{theorem}\label{thm:scale}
Suppose we have a family of classical LDPC codes with parameters $[n_1, k_1 = rn_1, d_1]$ that are locally testable with soundness $\rho$. Then for any $n_2$, there exists a family of quantum locally testable codes with soundness $4\rho$, locality bounded by $O(n_2)$, and parameters $[n_1n_2, (1+r)n_1n_2/2, \Theta(\min(d_1,n_2))]$.
\end{theorem}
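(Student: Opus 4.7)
The strategy is to apply the check product operation from Section~\ref{sec:cp} with the given classical LTC as the outer code and a carefully chosen random quantum code as the inner code, then tabulate the parameters of the resulting quantum code. Write $C_1$ for the classical code with parameters $[n_1, rn_1, d_1]$ and soundness $\rho$, and let $Q_2$ be an inner CSS code on $n_2$ qubits to be specified. The resulting check product code lives on $n_1 n_2$ qubits, and its properties should inherit from both ingredients via the general check product soundness analysis used to prove Lemma~\ref{thm:dist2}.

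First, I would establish the existence of a suitable inner code via a probabilistic argument. Random CSS codes on $n_2$ qubits with LDPC check matrices of appropriate density achieve, with high probability, constant rate, linear distance $\Theta(n_2)$, and constant soundness; a union bound over the distance, rate, and soundness conditions (for both the $X$- and $Z$-type checks) yields a single code satisfying all three requirements. I then fix one such $Q_2$ with soundness at least $1/2$, noting that because $n_2$ need not grow with $n_1$, this is a single existence claim rather than an asymptotic family.

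Next, I would verify the four parameter claims for $C_1$ combined with $Q_2$ under check product. The block length $n_1 n_2$ is immediate. The rate $(1+r) n_1 n_2 / 2$ follows from the explicit dimension count baked into the check product, combining the $r n_1$ redundancies of the outer code with the inner code's rate contribution, in a way that specializes correctly to Lemma~\ref{thm:dist2}. For the distance, I would show that any nontrivial logical Pauli either acts nontrivially on a single inner block (forcing weight $\Omega(n_2)$ by the inner code's distance) or has block support that survives the outer parity checks (forcing at least $d_1$ nonempty blocks, hence weight $\Omega(d_1)$); in both cases the bound is $\Theta(\min(d_1, n_2))$. The locality $O(n_2)$ follows since each check is supported on $O(1)$ outer blocks and at most $n_2$ qubits within each.

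The main obstacle, and the step requiring the most care, is the soundness analysis giving the factor $4\rho$. My plan is to decompose an arbitrary Pauli error $E$ on $n_1 n_2$ qubits into its restrictions on the $n_1$ inner blocks, and apply the inner code's soundness block-by-block to extract, for each block, a minimum-weight ``representative'' Pauli together with a bit indicating which logical outer-coset that block realizes. The induced $n_1$-bit string then carries a syndrome weight proportional to the outer-level syndrome weight of $E$; applying the outer code's soundness $\rho$ to this string bounds the total distance to the codespace, with the factor of $4$ absorbing the losses from the CSS $X/Z$ decomposition and the inner-to-outer block reduction. The check product's key design feature, that outer parity checks are built from products of inner \emph{logical} operators rather than raw Pauli generators, is what makes this two-level soundness reduction close cleanly.
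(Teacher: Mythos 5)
There is a genuine gap, and it sits at the heart of your plan: the inner code you posit does not exist, and even if it did, it would not give the claimed distance. You ask for a random CSS code on $n_2$ qubits that is simultaneously LDPC, of linear distance, and of constant soundness --- but constructing such a QLTC is essentially the open problem this theorem is a step toward, and random LDPC CSS codes are not known (or believed) to have constant soundness. More importantly, the check product does not reward quantum-LDPC structure: by Lemma~\ref{lem:cpquantum} the distance of $Q\star C$ is $\min\bigl(d(C),\,d(C_X),\,d(C_Z)\bigr)$, where $d(C_X), d(C_Z)$ are the distances of the \emph{classical} kernels of the inner check matrices. If the inner code is LDPC, then $C_X\supseteq C_Z^{\perp}$ contains the constant-weight rows of $H_Z$, so $d(C_X)=O(1)$; concretely, for a low-weight $w\in C_Z^{\perp}$ and a basis vector $e_i\notin C^{\perp}$, the operator $X^{w\otimes e_i}$ is a logical of constant weight. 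This is exactly why your block-level distance argument fails: a logical supported in one inner block need only be a nonzero element of the classical code $C_X$ there, not a quantum logical of the inner code. The paper's route is the opposite of yours: it takes a \emph{dense} random CSS code (a random $C$ of dimension $3n_2/4$ and a random subspace $D\le C$ of dimension $n_2/4$, Lemma~\ref{thm:dualcode}) precisely so that both classical component codes have linear distance, accepting locality $O(n_2)$, which the theorem permits.

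Your soundness step is also not the mechanism that makes the construction work. The product checks are tensor products of rows of $H_X$ (or $H_Z$) with rows of $H$; there are no pure inner-block checks, so a ``correct each block to the inner code, then test the block-level string with the outer code'' reduction does not match the check structure, and in any case it leans on an inner soundness you cannot supply. The paper's key idea is elementary but different (Claim~\ref{cl:tr} and Lemma~\ref{lem:cpCLTC}): Gaussian-eliminate $H_X, H_Z$ into systematic form $[I_{m}\mid R]$, note that \emph{any} code presented this way trivially has soundness $\ge n/m$ (fix violated rows one at a time using the identity block), and then show that the check product of a soundness-$\rho$ CLTC with a code in this systematic form has soundness $\ge \rho\, n_2/m$: for each $i\le m$ one repairs all violated checks whose $H_X$-component is the $i$-th row by editing only the $i$-th column of the word, using the soundness of $C$, without disturbing the other rows. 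The constant $4$ is then not a loss to be ``absorbed'' but the exact ratio $n_2/m_X = n_2/m_Z = 4$ coming from the chosen dimensions $3n_2/4$ and $n_2/4$ of the random codes. Without the systematic-form lemma and the dense random inner code, neither the soundness $4\rho$ nor the distance $\Theta(\min(d_1,n_2))$ can be established along the lines you propose.
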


For instance, taking $n_2 = \log(n_1)$, we obtain QLTCs of constant soundness, constant rate, $\Theta(\log(n))$ distance and $O(\log(n))$ locality.

Our second main result utilizes the distance balancing technique for quantum codes introduced by Hastings~\cite{hastings2016weight}. Given a quantum code of distinct $X$ and $Z$ distance, Hastings showed that one could take the hypergraph product of this quantum code with a classical repetition code to obtain a new quantum code with balanced $X$ and $Z$ distance. 
Notably, this technique balances distance at the expense of soundness -- if the original quantum code has soundness $\rho$ and one uses a repetition code of length $\ell$ in the hypergraph product, the soundness of the resulting quantum code would be $\rho/\ell$. 

To address this deficiency, we propose a slight yet critical modification to the distance balancing technique such that the soundness of the component quantum code is preserved, at partial expense in locality. We present these results in Section~\ref{sec:distbalance}, arriving at our second main result.





\begin{theorem}\label{thm:newdistbala}
Fix integer $\ell \ge 2$. Given a family of classical LDPC codes with parameters $[n, k, d]$ that are locally testable with $m$ checks and soundness $\rho$, there exists a family of quantum locally testable codes of soundness $\Omega(\rho)$ and parameters $[N = n\ell + m(\ell-1), k, \min(d, 2\ell)]$, such that a $1/\ell$ fraction of $X$-stabilizer generators have weight $\Theta(\ell)$, and at most $1/\ell$ fraction of the qubits are checked by $\Theta(\ell)$ $Z$-stabilizer generators. All other stabilizer generators are constant weight, and all other qubits are checked by a constant number of stabilizer generators. 
\end{theorem}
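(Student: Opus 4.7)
The plan is to take the standard hypergraph product of the classical LTC $C$ with the length-$\ell$ repetition code $R_\ell$ as the base CSS code, and then augment it with a carefully chosen set of redundant long stabilizers that expose the soundness of $C$ directly, rather than diluting it by a factor of $\ell$ as in Hastings' original distance-balancing argument.

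\textbf{Construction.} Let $H \in \FF_2^{m \times n}$ be the parity check matrix of $C$ and let $H_{\mathrm{rep}} \in \FF_2^{(\ell-1) \times \ell}$ be the check matrix of $R_\ell$. The base HGP code has qubit set $[n] \times [\ell] \sqcup [m] \times [\ell-1]$ (of total size $N = n\ell + m(\ell-1)$), with $X$-stabilizer matrix $H_X = [H \otimes I_\ell \mid I_m \otimes H_{\mathrm{rep}}^T]$ and $Z$-stabilizer matrix $H_Z = [I_n \otimes H_{\mathrm{rep}} \mid H^T \otimes I_{\ell-1}]$. I would then augment $H_X$ with the $m$ redundant rows $\{h_i \otimes \mathbf{1}_\ell\}_{i=1}^m$, each of weight $\mathrm{wt}(h_i)\cdot\ell = \Theta(\ell)$, and perform a dual augmentation on $H_Z$ that concentrates $\Theta(\ell)$ high-weight $Z$-stabilizers onto at most $n$ specially chosen qubits (about a $1/\ell$ fraction of the $N$ qubits). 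Since each new stabilizer lies in the row-span of the corresponding original matrix, the stabilizer group, the dimension $k$, and the underlying distance are unchanged while the local check multiplicity is inflated in the claimed pattern.

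\textbf{Parameters.} The qubit count and dimension $k$ follow immediately from the Künneth structure of the HGP of $C$ with $R_\ell$. For distance, I would show that any logical $Z$ representative can be moved, via $\mathrm{im}\,H_Z^T$-equivalence, into the form $c \otimes e_j$ with $c \in C\setminus\{0\}$, giving weight at least $d$; meanwhile any nontrivial logical $X$ representative must commute with every augmented $X$-stabilizer, and the added long rows $h_i \otimes \mathbf{1}_\ell$ impose an extra mod-$2$ parity constraint across the rep-code direction that, combined with the standard HGP analysis, forces the weight to be at least $2\ell$, yielding $\min(d, 2\ell)$.

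\textbf{Soundness.} For a Pauli-$Z$ error $E$, decompose $E = E_0 + E_1$ over the two qubit blocks and write $E_0 = \sum_{j=1}^\ell v_j \otimes e_j$ with $v_j \in \FF_2^n$. The usual $X$-syndromes on the first block read off $Hv_j$ for each $j$, while the new long $X$-syndromes read off $H\bigl(\sum_j v_j\bigr)$. Applying the $\rho$-LTC property of $C$ yields codewords $c_j \in C$ with $|v_j - c_j|$ controlled by $|Hv_j|/\rho$ and a codeword $c_* \in C$ with $|(\sum_j v_j) - c_*|$ controlled by $|H(\sum_j v_j)|/\rho$. These can be ``glued'' across the $\ell$ rep-code copies using the rep-code $Z$-stabilizers (which cost $O(1)$ weight per column correction) to map $E_0$ to a quantum codeword with total move-cost $O(|\sigma_X(E)|/\rho)$, where $\sigma_X(E)$ is the augmented $X$-syndrome. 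The contribution of $E_1$ is then controlled via the $I_m \otimes H_{\mathrm{rep}}^T$ block of $H_X$ together with the LDPC property of $H$. A mirror argument using the augmented $Z$-stabilizers handles Pauli-$X$ errors, giving the overall soundness bound $\Omega(\rho)$.

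\textbf{Main obstacle.} The technical heart is the soundness step: the new long stabilizers only see the first qubit block, so transferring their ``undiluted'' $\rho$-soundness to $E_1$ and to the cross-terms between blocks requires a careful stabilizer-equivalence argument that trades $E_1$-weight for $E_0$-weight at constant cost per column. Making this quantitative, while matching the constant factors lost in the gluing step against the gains from the classical LTC property, is where I expect the argument to be most delicate; the heuristic is that the long stabilizers bypass the rep-code direction and so never pay the $1/\ell$ penalty that plagues Hastings' unmodified construction.
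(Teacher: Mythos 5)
Your construction is not the paper's, and the soundness argument has a genuine gap exactly at the point you flag as the ``main obstacle.'' The paper does not add redundant generators to the standard (path-type) hypergraph product of $C$ with $R_\ell$. It first forms the $2n$-qubit base code $\css([H,H],[I,I])$ (so that the $Z$-checks are the weight-two pairs $(e_i,e_i)$, which is used crucially later), and then takes its product with a repetition complex whose boundary matrix has been changed by column operations from the path form to the \emph{star} form of equation~\eqref{eq:newmtx}; this change of basis on the edge space is what makes every error on the edge-type qubit block produce a syndrome proportional to its own weight. Your redundant rows $h_i\otimes\mathbf{1}_\ell$ are supported entirely on the $[n]\times[\ell]$ block, so they are blind to the $[m]\times[\ell-1]$ block, which is where the $1/\ell$ loss of the unmodified construction actually lives: with the path-form $H_{\mathrm{rep}}$, the $Z$-error $e_i\otimes(\epsilon_1+\cdots+\epsilon_t)$ on the second block violates only the two original checks at the segment endpoints and none of your added checks, yet its distance from $\ker H_X$ can be of order $\min(t,\ell-t)$ unless every unit syndrome $e_i$ of $H$ has an $O(1)$-weight preimage -- a property not implied by the LDPC/LTC hypotheses on $C$. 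Symmetrically, for $X$-errors $v\otimes(e_1+\cdots+e_t)$ with $v\notin C^{\perp}$ the $Z$-syndrome has weight only $|v|$ while the distance to $\ker H_Z$ grows like $\min(t,\ell-t)\,d(v,C^{\perp})$, again a $1/\ell$ loss; your ``dual augmentation'' of $H_Z$ is never specified, and since redundant rows lie in the row span they add no constraints and only see the boundary, not the interior, of such segments. The paper's proof of $X$-soundness ($1/3$) instead leans on the $[I,I]$ structure of the base code's $Z$-checks combined with the star matrix; nothing in your sketch substitutes for that.

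The distance claim is also wrong in mechanism: rows in the row span of $H_X$ do not change the stabilizer group, so they cannot ``impose an extra mod-$2$ parity constraint'' on logical $X$ operators. Concretely, $e_i\otimes\mathbf{1}_\ell$ with $e_i\notin C^{\perp}$ (zero on the second block) commutes with all $Z$-checks and is not an $X$-stabilizer, so it remains a weight-$\ell$ $X$-logical of your code; hence $d_X\le\ell$, not $\ge 2\ell$. The paper gets $2\ell$ because its base code has $d_x=2$ and the product multiplies $d_x$ by $\ell$ (its qubit count is accordingly $2n\ell+m(\ell-1)$, the classical code being doubled first). To repair your approach you would have to abandon the redundancy idea and either adopt the star-form boundary matrix (equivalently, a $1$-complex with constant boundary/coboundary expansion) or prove the unjustified low-weight-preimage property, neither of which is in your sketch.
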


As an example, choosing $\ell = n$, we obtain QLTCs with constant soundness, $\Theta(1/\sqrt{N})$ rate, $\Theta(\sqrt{N})$ distance and $O(\sqrt{N})$ locality. While this code family is no longer LDPC for any $\ell$ scaling with $n$, we note that the check weights are non-uniform. In particular, we note that the total check weight of all stabilizer generators is $\Theta(N)$, which means the average locality is $\Theta(1)$. While it is unclear whether average locality is an useful measure, we note the constant average locality here since it is an interesting property that arises naturally from our constructions.

We now present a summary of parameters of known QLTCs and our constructions in the following tables. Since some of these constructions have components whose size can be tweaked (such as the length of the repetition code in Theorem~\ref{thm:newdistbala}), in table~\ref{tab:general} we present the parameters of the general forms of these constructions, and in table~\ref{tab:special} we present the parameters of special cases for direct comparison.
We further remark that in terms of worst case bounds, the parameters of Theorem~\ref{thm:scale} are strictly better than that of Theorem~\ref{thm:newdistbala}. However, the average locality of Theorem~\ref{thm:newdistbala} are both constant, which is not the case in Theorem~\ref{thm:scale}.

\begin{table}[H]
    \centering
    \begin{tabular}{| c || c | c | c | c | c | c |}
    \hline
    Constructions & Ref~\cite{hastings2016quantum} & Ref~\cite{leverrier2022towards} & Theorem~\ref{thm:scale} & Theorem~\ref{thm:newdistbala} \\
    \hline 
    \hline 
    Physical qubits  
    & $N \approx (2p)^{2n}$ & $N$ & $N = n_1n_2$ & $N = nl$ \\
    \hline 
    Soundness 
    & $1/(np)$ & $\Omega(1/\log(N))$ & $\Omega(1)$ & $\Omega(1)$  \\
    \hline 
    Logical qubits 
    & 2 & 1 & $n_1n_2$ & $n$ \\
    \hline 
    Distance
    & $\Theta(p^{n+1})$ & $\Theta(\sqrt{N})$ & $\Theta(\min(n_1, n_2))$ & $\Theta(\min(n, \ell))$  \\
    \hline 
    Locality
    & $\Theta(2n)$ & $O(\log(N))$ & $O(n_2)$ &  $\avg= \Theta(1), \max = \Theta(\ell)$ \\
    \hline 
    \end{tabular}
    \caption{Parameters of the general forms of the constructions. }
    \label{tab:general}
\end{table}

\begin{table}[H]
    \centering
    \begin{tabular}{| c || c | c | c | c | c | c |}
    \hline
    Constructions & Ref~\cite{hastings2016quantum} & Ref~\cite{leverrier2022towards} & Theorem~\ref{thm:scale} & Theorem~\ref{thm:newdistbala} \\
    \hline 
    \hline 
    Physical qubits  
    & $N$ & $N$ & $N$ & $N$ \\
    \hline 
    Soundness 
    & $1/\log(N)^2$ & $\Omega(1/\log(N))$ & $\Omega(1)$ & $\Omega(1)$  \\
    \hline 
    Logical qubits 
    & 2 & 1 & $\Theta(N)$ & $\Theta(\sqrt{N})$ \\
    \hline 
    Distance
    & $\Theta(\sqrt{N})$ & $\Theta(\sqrt{N})$ & $\Theta(\log(N))$ & $\Theta(\sqrt{N})$ \\
    \hline 
    Locality
    & $O(\log(N))$ & $O(\log(N))$ & $O(\log(N))$ &  $\avg = \Theta(1),\max = \Theta(\sqrt{N})$ \\
    \hline 
    \end{tabular}
    \caption{Parameters of special cases of the constructions. }
    \label{tab:special}
\end{table}

\section{Preliminaries}

In this section, we introduce the basic definitions of quantum CSS codes and local testability. Given a vector $v\in \FF_2^n$, we let $X^v$ denote the $n$-qubit Pauli operator $X^{v_1}\otimes X^{v_2}\otimes \cdots X^{v_n}$, where $X^1 = X$ and $X^0 = I$. We define $Z^v$ similarly. 
\begin{definition}[CSS Codes]
Let $C_X=\ker(H_X)$ and $C_Z=\ker(H_Z)$ be linear codes of length $n$ such that $C_X^\perp\subseteq C_Z$ (equivalently, $H_XH_Z^T=0$).
The quantum code $Q = \css(H_X,H_Z)$ is the stabilizer code where the $X$-stabilizers have the form $X^c$ for $c\in C_X^{\perp}$, and the $Z$-stabilizers have the form $Z^c$ for $c\in C_Z^{\perp}$. Its code space is spanned by following states, where for $\gamma$ ranges over $C_Z$. 
\begin{equation}
|\gamma + C_X^\perp\rangle := \frac{1}{\sqrt{|C_X^\perp|}}\sum_{c\in C_X^\perp} |\gamma+c\rangle
\end{equation}
\end{definition}

\begin{fact}
If $C_X$ and $C_Z$ are $[n,k_X,d_X']$ and $[n,k_Z,d_Z']$ codes, respectively, then $Q=\css(H_X,H_Z)$ has dimension $k=k_X+k_Z-n$ and minimum distance $d=\min(d_X,d_Z)\geq \min(d_X',d_Z')$ where $d_X := \min \{ |c| : c\in C_X\setminus C_Z^\perp\}$ and $d_Z := \min \{ |c| : c\in C_Z\setminus C_X^\perp\}$.
\end{fact}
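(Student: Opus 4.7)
The Fact has two parts—the dimension formula and the distance bound—which I would address independently. For the dimension, I would count independent stabilizer generators. The $X$-stabilizer subgroup is parameterized by $C_X^\perp$, an $\FF_2$-space of dimension $n-k_X$; likewise the $Z$-subgroup has dimension $n-k_Z$. The CSS condition $H_X H_Z^T = 0$ ensures all these generators commute, and they are linearly independent (modulo phase) since no nontrivial product of $X$-type Paulis can equal a nontrivial product of $Z$-type Paulis. Hence the stabilizer group has $2n - k_X - k_Z$ independent generators, and the code dimension is $k = n - (2n - k_X - k_Z) = k_X + k_Z - n$.

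For the distance, I would first characterize logical operators and then reduce to pure $X$ or pure $Z$ types. A Pauli $X^a Z^b$ commutes with all stabilizers iff $a \in (C_Z^\perp)^\perp = C_Z$ and $b \in (C_X^\perp)^\perp = C_X$, and is itself a stabilizer iff additionally $a \in C_X^\perp$ and $b \in C_Z^\perp$. Its weight is $|\mathrm{supp}(a) \cup \mathrm{supp}(b)| \geq \max(|a|,|b|)$. If $X^a Z^b$ is a nontrivial logical, then either $a \in C_Z \setminus C_X^\perp$, in which case $X^a$ alone is a nontrivial logical of weight $|a|$, or $b \in C_X \setminus C_Z^\perp$, in which case $Z^b$ alone is a nontrivial logical of weight $|b|$. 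Either way, the weight of $X^a Z^b$ is at least $\min(d_X, d_Z)$, so $d = \min(d_X, d_Z)$.

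The final inequality $\min(d_X, d_Z) \geq \min(d_X', d_Z')$ is immediate: since $0 \in C_Z^\perp$, every element of $C_X \setminus C_Z^\perp$ is a nonzero codeword of $C_X$ and hence has weight at least $d_X'$, giving $d_X \geq d_X'$; by symmetry $d_Z \geq d_Z'$. The one step that requires genuine thought, rather than bookkeeping, is the reduction from a mixed logical $X^a Z^b$ to a pure-type logical of no greater weight—this is the principle that CSS distance splits as $\min(d_X, d_Z)$, and it is ultimately the observation that for a CSS code the $X$ and $Z$ components of any commuting Pauli are \emph{independently} constrained by the orthogonality conditions $a \in C_Z$, $b \in C_X$.
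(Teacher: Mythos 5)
Your proof is correct; the paper states this Fact without proof (it is the standard CSS dimension-and-distance argument), and your argument—counting the $(n-k_X)+(n-k_Z)$ independent stabilizer generators for the dimension, characterizing undetectable Paulis $X^aZ^b$ by $a\in C_Z$, $b\in C_X$, and reducing a mixed nontrivial logical to a pure-type one of no greater weight—is exactly that standard proof. The equality $d=\min(d_X,d_Z)$ (not just the lower bound) is indeed secured by your observation that the pure-type representatives $X^a$ with $a\in C_Z\setminus C_X^\perp$ and $Z^b$ with $b\in C_X\setminus C_Z^\perp$ are themselves nontrivial logicals, so minimizers of those sets achieve the bound.
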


In this paper, we consider the following definition of local testability, which is the same as in~\cite{leverrier2022towards}.

\begin{definition}[Locally testable code]
A linear code $C\in \FF_2^n$ is locally testable with soundness $\rho$ and check weight $w$ if it has parity check matrix $H\in \FF_2^{m\times n}$ with rows of weight $w$ such that for any $x\in \FF_2^n$ we have
\begin{equation}
    \frac{1}{m}|Hx|\geq \frac{\rho}{n}d(x,C)
\end{equation}
where $d(x,C):=\min_{c\in C} d(x,c)$ and $d(\cdot,\cdot)$ denotes the Hamming distance.
\end{definition}

We consider the definition of quantum locally testable codes as in~\cite{eldar2015local}. Given a quantum stabilizer code with stabilizer generators $S_1, \cdots, S_m$ all having weight at most $w$, we define the projector operators $\Pi_i = \frac{1}{2}(I - S_i)$. For a quantum codespace $Q \le (\CC^2)^{\otimes n}$, we define the \emph{$t$-fattening} of $Q$ as
\[
Q_t = \Span{\{(A_1\otimes \cdots \otimes A_n)\ket{\psi}: \ket{\psi}\in Q, \#\{i\in [n], A_i\ne I\} \le t\}}.
\]
This is the space of states that are at distance at most $t$ from the codespace $Q$. Let $\Pi_{Q_t}$ be the projector onto $Q_t$, and let 
\[
D_Q = \sum_{t\ge 1}t(\Pi_{Q_t} - \Pi_{Q_{t-1}}).
\]

\begin{definition}[Quantum Locally Testable Codes]
A $n$-qubit quantum code $Q$ with stabilizer generators $S_1, \cdots, S_m$ is locally testable with soundness $\rho$ and check weight $w$ if all its stabilizer generators have weight at most $w$, and
\[
\frac{1}{m}\sum_{i=1}^m\frac{1}{2}(I - S_i) \succeq \frac{\rho}{n}D_Q.
\]
\end{definition}

Local testability of quantum CSS codes and the testability of their classical component codes are closely related, as shown by~\cite{eldar2015local}:
\begin{lemma}[Fact~17 of~\cite{eldar2015local}]~\label{lem:fact17}
A quantum CSS code $\css(H_X, H_Z)$ is a QLTC with soundness $\rho$ if $C_X = \ker(H_X), C_Z = \ker(H_Z)$ are CLTCs of soundness $\rho$. Conversely, if $\css(H_X, H_Z)$ is a QLTC with soundness $\rho$, then $C_X, C_Z$ are CLTCs of soundness at least $\rho/2$.
\end{lemma}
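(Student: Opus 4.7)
The plan is to exploit the CSS structure, which lets the $X$- and $Z$-check operators be simultaneously diagonalized and decouples the quantum testability question into two classical ones on $C_X$ and $C_Z$.

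For the forward direction (CLTC implies QLTC), I would first partition $(\mathbb{C}^2)^{\otimes n}$ into the common eigenspaces $V_{s_X,s_Z}$ of all $X$- and $Z$-stabilizer generators, indexed by a joint syndrome $(s_X,s_Z) \in \FF_2^{m_X} \times \FF_2^{m_Z}$. The CSS condition $C_X^\perp \subseteq C_Z$ ensures the two check sets commute, so these eigenspaces are pairwise orthogonal, and each equals the Pauli translate $X^a Z^b\, Q$ for any $a,b$ with $H_Z a = s_Z$ and $H_X b = s_X$. Picking minimum-weight representatives then shows
\[
Q_t \;=\; \bigoplus_{(s_X,s_Z):\, w_X(s_Z) + w_Z(s_X) \le t} V_{s_X,s_Z},
\]
where $w_X(s_Z) := \min\{|a|: H_Z a = s_Z\}$ and $w_Z(s_X) := \min\{|b|: H_X b = s_X\}$. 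Consequently $D_Q$ is block-diagonal with eigenvalue $w_X(s_Z) + w_Z(s_X)$ on $V_{s_X,s_Z}$, while the QLTC test operator $\frac{1}{m}\sum_i (I-S_i)/2$ acts there as the scalar $(|s_X|+|s_Z|)/m$, where $m = m_X + m_Z$. Applying the two CLTC hypotheses in the forms $|s_X|/m_X \ge (\rho/n)\, w_Z(s_X)$ and $|s_Z|/m_Z \ge (\rho/n)\, w_X(s_Z)$, then adding and renormalizing, delivers the sought operator inequality $\frac{1}{m}\sum_i (I-S_i)/2 \succeq \frac{\rho}{n}\, D_Q$ up to a constant determined by the ratio of $m_X$ to $m_Z$.

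For the converse direction (QLTC implies CLTC), I would engineer test states that each see only one of the two classical codes. Given $y \in \FF_2^n$, set
\[
\ket{\psi_y} \;=\; \frac{1}{\sqrt{|C_X^\perp|}} \sum_{c \in C_X^\perp} \ket{y+c},
\]
the normalized projection of $\ket{y}$ onto the $+1$ eigenspace of every $X$-check. By construction $\ket{\psi_y}$ has $X$-syndrome zero; and because $Z^d$ commutes with $X^c$ whenever $d \in C_Z^\perp \subseteq C_X$ and $c \in C_X^\perp$, it is still an eigenstate of every $Z$-check, with $Z$-syndrome $H_Z y$. Hence $\ket{\psi_y} \in V_{0,\, H_Z y}$, yielding $\bra{\psi_y} D_Q \ket{\psi_y} = d(y, C_Z)$ and $\bra{\psi_y}\frac{1}{m}\sum_i (I-S_i)/2 \ket{\psi_y} = |H_Z y|/m$. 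Plugging into the QLTC hypothesis, then multiplying through by $m/m_Z$, gives the CLTC soundness inequality for $C_Z$. An analogous construction in the $\ket{\pm}$-basis, projecting an $X$-basis state onto the $+1$ eigenspace of every $Z$-check, handles $C_X$.

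The main obstacle is keeping the constants honest. Both directions turn on the ratios $m_X/m$ and $m_Z/m$; in the forward direction, one must lower-bound the weighted average $(m_X w_Z(s_X) + m_Z w_X(s_Z))/m$ by a constant multiple of $w_X(s_Z) + w_Z(s_X)$, which is immediate when $m_X \asymp m_Z$ but degenerates otherwise. This imbalance is the source of the stated factor of $1/2$ in the lemma. Beyond this bookkeeping, the proof reduces cleanly to the eigenspace decomposition and the explicit test states $\ket{\psi_y}$ described above.
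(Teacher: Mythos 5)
Since the paper itself gives no proof of this lemma (it is imported verbatim as Fact~17 of Eldar--Harrow), I can only assess your argument on its own terms. The overall strategy -- decompose the Hilbert space into joint syndrome eigenspaces $V_{s_X,s_Z}$ on which both the check operator and $D_Q$ act as scalars, and prove the converse with the explicit states $\ket{\psi_y}$ -- is the right one, and your converse is essentially correct. One factual slip first: on $V_{s_X,s_Z}$ the eigenvalue of $D_Q$ is the minimum weight of a \emph{single} Pauli $X^aZ^b$ carrying those syndromes, i.e. $\min\{|\mathrm{supp}(a)\cup\mathrm{supp}(b)| : H_Z a = s_Z,\ H_X b = s_X\}$, not $w_X(s_Z)+w_Z(s_X)$; a $Y$ error on one qubit contributes weight one, not two. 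This only overestimates $D_Q$, so it is conservative for the forward inequality and irrelevant in the converse (where $s_X=0$), but your displayed characterization of $Q_t$ is not correct as written.

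The genuine gap is the constant in the forward direction. With the normalizations used in this paper, ``adding and renormalizing'' the two classical soundness bounds gives, on $V_{s_X,s_Z}$, only $\frac{|s_X|+|s_Z|}{m}\ \ge\ \frac{\rho}{n}\cdot\frac{m_X\,w_Z(s_X)+m_Z\,w_X(s_Z)}{m}$, and the weighted average on the right is at most $\max\bigl(w_X(s_Z),w_Z(s_X)\bigr)$, which is at most the true $D_Q$-eigenvalue. Taking $s_X=0$ shows this route can certify quantum soundness no better than $\rho\cdot\min(m_X,m_Z)/m\le\rho/2$, whereas the lemma's forward direction claims soundness $\rho$ with no loss. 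Your closing paragraph misattributes the lemma's factor $1/2$ to this step: in the statement the $1/2$ sits on the \emph{converse}, and the forward loss you encounter is not ``bookkeeping'' -- no rearrangement of the same two inequalities recovers $\rho$ when $m_X$ and $m_Z$ are unbalanced, so the degradation ``up to a constant determined by the ratio of $m_X$ to $m_Z$'' is an unproved piece of exactly the claim being asserted. Symmetrically, your converse as computed actually gives $C_Z$ soundness $\rho\,m/m_Z\ge\rho$, i.e. \emph{stronger} than the quoted $\rho/2$. That your constants come out worse than claimed in one direction and better in the other is a sign that the bookkeeping must be redone against the precise normalization of Eldar--Harrow's Fact~17 rather than waved through; as it stands, the forward half of your proposal establishes a weaker statement than the lemma you set out to prove.
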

With this lemma in place, we now proceed to present our constructions.

\section{Check Product of Codes}\label{sec:cp}
We start by presenting a simple construction that proves Lemma~\ref{thm:dist2}. All proofs in section~\ref{sec:cp} and \ref{sec:cpofLTCs}, except for that of Claim~\ref{clm:dup}, are included in the appendix.

\subsection{A Motivating Example}\label{sec:motiv}
Suppose $C = \ker(H)$, $H\in \FF_2^{m\times n}$ is a classical LTC with soundness $\rho$, rate $r$, distance $d$, and in addition an LDPC with weight $w$. Define $\ol{C} = \ker(\ol{H})$, where $\ol{H} = [H, H]$. Then $\ol{H}\ol{H}^T = 0$, which means $Q = \css(\ol{H}, \ol{H})$ is a valid quantum code. 
We now prove the following claim, which justifies considering this approach.
\begin{claim}\label{clm:dup}
$\ol{C} = \ker(\ol{H})$ is a LDPC CLTC with soundness $2\rho$, rate $\frac{1+r}{2}$, and distance 2.
\end{claim}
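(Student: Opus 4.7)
The plan is to unpack the definition of $\bar{C}$ explicitly and reduce all four claims (code length, rate, distance, LDPC property, and soundness) to direct statements about the original code $C$. Since $\bar{H}(x,y)^T = H(x+y)^T$, a vector $(x,y)\in\FF_2^{2n}$ lies in $\bar{C}$ iff $x+y\in C$. This single identity drives everything.

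First I would dispense with the easy parameters. The rate follows from $|\bar C| = 2^n\cdot |C| = 2^{n(1+r)}$ (for each $y\in\FF_2^n$, the set of valid $x$ is the coset $C+y$), giving rate $(1+r)/2$. For the distance, $(e_i,e_i)\in\bar C$ for each $i$, so $d(\bar C)\le 2$; conversely, a weight-$1$ codeword would force some $e_i\in C$, contradicting $d(C)\ge 2$. The LDPC property is immediate: each row of $\bar H$ has weight $2w$, and each column of $\bar H$ is a column of $H$, so column weights are unchanged.

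The only interesting step is soundness. I would prove the key lemma that for any $(x,y)\in\FF_2^{2n}$,
\[
d\bigl((x,y),\bar C\bigr) \;=\; d(x+y,\,C).
\]
One direction uses triangle inequality: for any $(a,b)\in\bar C$ with $c:=a+b\in C$,
\[
|x-a| + |y-b| \;\ge\; |(x-a)+(y-b)| \;=\; |x+y-c| \;\ge\; d(x+y,C).
\]
For the matching upper bound, given $c^*\in C$ nearest to $x+y$, take $(a,b) = (x,\,x+c^*)$; then $a+b = c^*\in C$ and $|x-a|+|y-b| = |y-x-c^*| = |x+y-c^*| = d(x+y,C)$.

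Combining this with $|\bar H(x,y)| = |H(x+y)|$ and the soundness of $C$,
\[
\frac{1}{m}\bigl|\bar H(x,y)\bigr| \;=\; \frac{1}{m}\bigl|H(x+y)\bigr| \;\ge\; \frac{\rho}{n}\,d(x+y,C) \;=\; \frac{2\rho}{2n}\,d\bigl((x,y),\bar C\bigr),
\]
which is exactly soundness $2\rho$ for the length-$2n$ code $\bar C$. I do not anticipate any real obstacle: the factor of $2$ in the soundness is just the rescaling induced by doubling the code length, and the triangle-inequality computation above is the only nontrivial ingredient.
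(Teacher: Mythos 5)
Your proof is correct and follows essentially the same route as the paper: identify $\ovl{C}=\{(x,y):x+y\in C\}$, note $|\ovl{H}(x,y)|=|H(x+y)|$ and $d((x,y),\ovl{C})=d(x+y,C)$, and invoke the soundness of $C$. The only difference is that you explicitly prove the distance identity $d((x,y),\ovl{C})=d(x+y,C)$ (via the triangle inequality and the witness $(x,x+c^*)$), which the paper asserts without proof, and you compute the rate by counting codewords rather than independent checks.
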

\begin{proof}
A way to understand the code $\ol{C}$ is through its Tanner graph. Suppose the Tanner graph of $C$ consists of bit vertices $B$ and check vertices $K$. Then the Tanner graph of $\ol{C}$ is obtained simply by creating a copy $v'$ of each $v\in B$, where $v'$ and $v$ are connected to the same check bits in $K$. We can therefore represent each $z\in \ol{C}$ as $z = (x, y)$, where $x, y\in \FF_2^n$. We show
\begin{equation}\label{eq:newcode}
    \ol{C} = \{(x, y): x, y\in \FF_2^n, x+y\in C\}.
\end{equation}
Fix $(x, y)$ such that $x + y\in C$. Then $\ol{H}(x, y) = Hx + Hy = 0$, which means $(x, y)\in \ol{C}$. Similarly, fix $(x, y)\in \ol{C}$, then we have $H(x + y) = \ol{H}(x,y) = 0$. This proves~\eqref{eq:newcode}. Now we see that $\ol{C}$ has distance $2$, because for any $v\in B$, let $\one_v\in \FF_2^n$ be the indicator vector of $v$ (meaning that it has a one at index $v$, and 0 elsewhere), then $(\one_v, \one_{v'})\in \ol{C}$. We observe that any weight $1$ vector $v\in \FF_2^{2n}$ will have non-zero syndrome.

Now note that the check weights of $\ol{C}$ are bounded by $2w$, and each qubit is checked by at most $w$ checks. Therefore $\ol{C}$ is LDPC. The rate of $\ol{C}$ is also easy to compute -- the number of linearly independent checks stay at $(1-r)n$, while the number of bits is doubled. So the overall rate is $\frac{1+r}{2}$.

The more interesting part is to show local testability. We want to show $\forall x, y\in \FF_2^n$,
\[
|\ol{H}(x, y)|/m \ge 2\rho \cdot d((x, y), \ol{C})/2n. 
\]
We first note that $|\ol{H}(x, y)| = |Hx + Hy| = |H(x+y)|$. Moreover, $d((x, y), \ol{C}) = d(x+y, C)$. By local testability of $C$, we have
\[
|H(x + y)|/m \ge \rho \cdot d((x+y), C)/n,
\]
which completes our proof.
\end{proof}
\begin{restatable}{corollary}{CoroDistTwo}
The quantum code $Q = \css(\ol{H}, \ol{H})$ is a QLTC of soundness $2\rho$, check-weights bounded by $2w$, rate $r$, and $d_x = d_z = 2$.
\end{restatable}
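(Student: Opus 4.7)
The plan is to deduce all four properties of $Q=\css(\ovl{H},\ovl{H})$ directly from the classical statements already established for $\ovl{C}$ in Claim~\ref{clm:dup}, combined with the CSS-to-classical correspondence in Lemma~\ref{lem:fact17}.

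First I would check that $\css(\ovl{H},\ovl{H})$ is a well-defined CSS code by verifying $\ovl{H}\,\ovl{H}^T=0$. Since $\ovl{H}=[H,H]$, block multiplication gives $\ovl{H}\,\ovl{H}^T=HH^T+HH^T=0$ over $\FF_2$, so the commutation condition holds. The check-weight bound is immediate: each row of $\ovl{H}$ is a row of $H$ repeated twice, so its weight is at most $2w$.

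Next I would handle soundness and rate. For soundness, Claim~\ref{clm:dup} establishes that $\ovl{C}=\ker(\ovl{H})$ is a CLTC with soundness $2\rho$. Applying the forward direction of Lemma~\ref{lem:fact17} with $C_X=C_Z=\ovl{C}$ immediately gives that $Q$ is a QLTC with soundness $2\rho$. For the rate, the code has $N=2n$ physical qubits; using the CSS dimension formula $k=k_X+k_Z-N$ with $k_X=k_Z=\dim\ovl{C}=\frac{1+r}{2}\cdot 2n=(1+r)n$ yields $k=2(1+r)n-2n=2rn$, so the rate is $k/N=r$.

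The main point requiring thought is the distance. By symmetry $d_X=d_Z$, and I would show both equal $2$. Since $\ovl{C}$ has classical distance $2$, there are no weight-$1$ vectors in $\ovl{C}$, so certainly $d_X\ge 2$. For the upper bound I would exhibit a weight-$2$ codeword of $\ovl{C}$ that is not in $\ovl{C}^\perp$. The natural candidates are the vectors $(\one_v,\one_v)\in\FF_2^{2n}$, which lie in $\ovl{C}$ by~\eqref{eq:newcode}. A vector of $\ovl{C}^\perp$ is a linear combination of rows of $\ovl{H}=[H,H]$ and hence has the form $(z,z)$ with $z\in\mathrm{rowspace}(H)$. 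So $(\one_v,\one_v)\in\ovl{C}^\perp$ iff $\one_v\in\mathrm{rowspace}(H)$. As long as $r>0$, $\mathrm{rowspace}(H)$ is a proper subspace of $\FF_2^n$ and in particular cannot contain every standard basis vector; picking any $v$ with $\one_v\notin\mathrm{rowspace}(H)$ produces a weight-$2$ element of $\ovl{C}\setminus\ovl{C}^\perp$, giving $d_X\le 2$.

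The only real obstacle is this last step of separating $\ovl{C}$ from $\ovl{C}^\perp$ at the weight-$2$ level; it hinges on the mild nondegeneracy condition $r>0$, which is implicit in calling $C$ a nontrivial code. Everything else is either a direct corollary of Claim~\ref{clm:dup} via Lemma~\ref{lem:fact17}, or a one-line algebraic computation.
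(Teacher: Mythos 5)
Your proof is correct and follows essentially the same route as the paper: soundness via Lemma~\ref{lem:fact17} applied to $\ovl{C}$, rate by counting independent checks (equivalently the CSS dimension formula), and $d_x=d_z=2$ by exhibiting $(\one_v,\one_v)\in\ovl{C}\setminus\ovl{C}^\perp$, with the same rank/nontriviality argument the paper uses to guarantee such a $v$ exists.
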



By choosing $C$ as a known $c^3$-LTC, such as the left-right Cayley complex code~\cite{dinur2022locally}, we obtain Lemma~\ref{thm:dist2}. It is surprising that such a simple construction could already give us quantum codes of constant soundness. Moreover, this example demonstrates that the soundness and distance of a quantum code are not necessarily related. 
We generalize this construction in the following section.

\subsection{General Check Products}
We begin by making the following observation: the matrix $\ol{H} = [H, H]$ of the previous section could also be written as $[1, 1]\otimes H$, where $H_0 = [1, 1]$ is the parity check matrix of the repetition code $C_0 = \{00,11\}$. More generally, for any two classical codes $C_1 = \ker(H_1), C_2 = \ker(H_2)$, we can define the following \emph{check product} of the two codes:
\[
C_1\star C_2 = \ker(H_1\otimes H_2).
\]
The following facts are well known.

\begin{restatable}{lemma}{CheckProductProperty}\label{lem:cpprop}
Let $C_1, C_2$ be classical codes with parameters $[n_1, k_1, d_1]$ and $[n_2, k_2, d_2]$. The following hold:
\begin{enumerate}
    \item $C_1\star C_2 = C_1\otimes \FF_2^{n_2} + \FF_2^{n_1} \otimes C_2$. Namely, $C_1\star C_2$ is the dual tensor code of $C_1$ and $C_2$. 
    \item $C_1\star C_2$ has dimension $n_1n_2 - (n_1-k_1)(n_2-k_2)$. 
    \item $C_1\star C_2$ has distance $\min(d_1, d_2)$.
\end{enumerate}
\end{restatable}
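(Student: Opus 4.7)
The plan is to identify $\FF_2^{n_1 n_2}$ with the space of $n_1 \times n_2$ matrices via $e_i \otimes e_j \leftrightarrow E_{ij}$, under which the parity check $H_1 \otimes H_2$ acts on a matrix $V$ as $V \mapsto H_1 V H_2^T$. This matrix viewpoint makes all three claims transparent.

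For (1), I would observe that the rows of $H_1 \otimes H_2$ are the tensors of rows of $H_1$ with rows of $H_2$, and hence their span is exactly $C_1^\perp \otimes C_2^\perp$. Therefore $C_1 \star C_2 = (C_1^\perp \otimes C_2^\perp)^\perp$, and the claim reduces to the standard duality identity $(A \otimes B)^\perp = A^\perp \otimes W_2 + W_1 \otimes B^\perp$ for subspaces $A \le W_1$, $B \le W_2$. Claim (2) then follows by a one-line count using (1): $\dim(C_1 \star C_2) = n_1 n_2 - \dim(C_1^\perp \otimes C_2^\perp) = n_1 n_2 - (n_1-k_1)(n_2-k_2)$.

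For (3), the upper bound is given by the explicit codewords $c_1 \otimes e_j$ and $e_i \otimes c_2$ of weights $d_1$ and $d_2$ respectively, which lie in $C_1 \star C_2$ by (1). For the lower bound, I would suppose for contradiction that $V \ne 0$ satisfies $H_1 V H_2^T = 0$ and $|V| < \min(d_1, d_2)$. Each row of $H_1 V$ is a linear combination of rows of $V$, so its support is contained in the set of nonzero columns of $V$, which has size at most $|V| < d_2$. But each row of $H_1 V$ also lies in $C_2 = \ker(H_2)$, so being too light to be a nonzero codeword of $C_2$, it must vanish. Hence $H_1 V = 0$, meaning every column of $V$ lies in $C_1$; since each column has weight at most $|V| < d_1$, a symmetric argument forces every column of $V$ to vanish, contradicting $V \ne 0$.

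The main subtle step is the lower bound in (3): one must combine the low weight of $V$ in the row direction with the distance of $C_2$, then pivot to the column direction and the distance of $C_1$, in that order. Everything else in the proof is a clean application of tensor-product duality and is essentially textbook.
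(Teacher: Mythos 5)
Your proposal is correct, and for items (1), (2), and the upper bound in (3) it matches the paper's proof essentially verbatim: the row space of $H_1\otimes H_2$ is $C_1^\perp\otimes C_2^\perp$, duality gives the dual tensor code and the dimension count, and the codewords $c_1\otimes e_j$, $e_i\otimes c_2$ give $d \le \min(d_1,d_2)$. Where you genuinely diverge is the lower bound in (3). The paper argues by cases on the matrix $M$ representing a codeword: if some check-contraction $v^T M$ (or $Mw$) is nonzero, it is a nonzero codeword of $C_2$ (resp.\ $C_1$) and the weight bound follows; if all contractions vanish, then $M\in C_1\otimes C_2$ and the paper invokes the (unproved, standard) fact that the tensor code has distance $d_1d_2$. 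Your sequential contradiction --- a light $V$ forces every row of $H_1V$ to be a too-light codeword of $C_2$ and hence zero, whence every column of $V$ is a too-light codeword of $C_1$ and hence zero --- absorbs both cases at once and never needs the $d_1d_2$ fact, so it is slightly more self-contained; the paper's version, in exchange, makes explicit that codewords orthogonal to all pure-tensor checks actually lie in $C_1\otimes C_2$ and thus records the stronger $d_1d_2$ bound in that case. Both hinge on the same two observations (check-contractions of a codeword land in the other code, and the weight of any combination of rows of $V$ is at most the number of nonzero columns of $V$), so the difference is one of bookkeeping rather than substance, and your argument is complete as stated.
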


With this definition formalized, we can extend it to the check product of a classical code and a quantum code, as follows.
\begin{definition}[Check-Product of classical and quantum code]
Given a classical code $C = \ker(H)$ and a quantum CSS code given by $Q = \css(H_X, H_Z)$, we define the check-product of $Q$ and $C$ to be the quantum code $Q\star C = \css(H_X\otimes H, H_Z\otimes H)$.  
\end{definition}
It is straightforward to see that the commutativity condition is satisfied, so this definition is valid. We now address the distance of this quantum code.

\begin{restatable}{lemma}{CPDistance}\label{lem:cpquantum}
Given a quantum code $Q = \css(H_X, H_Z)$, the code $Q\star C$ has distance $\min(d(C), d(C_X), d(C_Z))$. In words, its distance is the minimum of all of its component classical codes' distance.
\end{restatable}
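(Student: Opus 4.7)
The plan is to reduce the quantum distance of $Q \star C$ to the classical distances of the two check-product codes $C_X \star C$ and $C_Z \star C$, and then invoke the classical distance formula for check products (Lemma~\ref{lem:cpprop} part 3). Since $Q \star C = \css(H_X \otimes H, H_Z \otimes H)$, its underlying classical codes are $\ker(H_X \otimes H) = C_X \star C$ and $\ker(H_Z \otimes H) = C_Z \star C$. Applying Lemma~\ref{lem:cpprop} part 3 yields $d(C_X \star C) = \min(d(C_X), d(C))$ and $d(C_Z \star C) = \min(d(C_Z), d(C))$, and the general CSS inequality $d \ge \min(d_X', d_Z')$ recorded in the fact at the start of Section~2 then gives the lower bound $d(Q \star C) \ge \min(d(C), d(C_X), d(C_Z))$.

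For the matching upper bound, I would exhibit explicit logical representatives of rank-one tensor form. To realize weight $d(C_Z)$, pick a minimum-weight codeword $c^* \in C_Z$ together with a weight-one vector $e_j \in \FF_2^m$ satisfying $e_j \notin C^\perp$ (which exists whenever $C$ is non-degenerate in the sense that every coordinate appears nonzero in some codeword). Then $c^* \otimes e_j \in C_Z \star C$, and by the elementary fact that a nonzero rank-one tensor $u \otimes v$ lies in a subspace-tensor product $U \otimes V$ only when $u \in U$ and $v \in V$, the choice $e_j \notin C^\perp$ forces $c^* \otimes e_j \notin C_X^\perp \otimes C^\perp = \mathrm{rowspace}(H_X \otimes H)$. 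Hence $c^* \otimes e_j$ is a non-stabilizer $X$-type logical of weight $d(C_Z)$. Symmetric constructions give $Z$-type logicals of weight $d(C_X)$, and taking $e_i \otimes c$ with $c \in C$ of minimum weight and $e_i \notin C_X^\perp$ yields logicals of weight $d(C)$, closing the upper bound.

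The main conceptual step is recognizing $\ker(H_X \otimes H)$ as the check product $C_X \star C$, after which the lower bound falls out immediately from Lemma~\ref{lem:cpprop} together with the CSS distance fact. The only real subtlety lies in the upper bound: I have to choose the auxiliary weight-one vector outside the appropriate dual so that the constructed tensor is a genuine logical rather than a hidden stabilizer. This is straightforward under the mild non-degeneracy assumption above, and is the only place where anything about the structure of the tensor product (beyond Lemma~\ref{lem:cpprop}) is used.
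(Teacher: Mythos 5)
Your proof is correct and takes essentially the same route as the paper: the lower bound comes from the check-product distance formula of Lemma~\ref{lem:cpprop} together with the CSS distance fact, and the upper bound from rank-one tensor representatives of the form (codeword)$\otimes e_j$ with the standard basis vector chosen outside the relevant dual, exactly as in the paper's argument. The only cosmetic difference is your non-degeneracy caveat, which is stronger than needed --- one only requires $C^{\perp}\ne \FF_2^{n}$ (and likewise $C_X^{\perp},C_Z^{\perp}$ proper), i.e.\ the component codes are nonzero, which is implicit once their distances are defined.
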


The natural question to ask now is --- what can we say about the soundness of general check product codes? It is straightforward to see that the check product of a CLTC with a classical code that is not locally testable is also not locally testable, and it is also not clear if the check product of two arbitrary CLTCs remains locally testable. However, as we will show in section~\ref{sec:cpofLTCs}, the check product of CLTCs with a specific form is indeed locally testable. This enables us to prove Theorem~\ref{thm:scale} by considering the check product of a $c^3$-LTC with random quantum CSS codes.

\section{The Check Product of a CLTC and a QLTC}\label{sec:cpofLTCs}
In this section, we present a construction that proves Theorem~\ref{thm:scale}. We once again begin by making a simple observation. Given a classical code $C = \ker(H)$, suppose $H$ is a $m\times n$ matrix with linearly independent rows. We may do Gaussian operations on the rows of $H$, formally multiplying $H$ from the left by a non-singular matrix $G$, such that the resulting matrix has the form (up to permuting the columns with a permutation matrix, $\Pi$):
\begin{equation}~\label{eq:LTCform}
H' = \begin{bmatrix}I_{m} | R\end{bmatrix} \;\; (= GH\Pi)   
\end{equation}
where $I_{m}$ is the $m\times m$ identity matrix, and $R$ is a $m\times (n-m)$ matrix. Note that $\ker(H') = \Pi^{-1} \ker(H) = \Pi^{-1} C$, and the rows of $H'$ can have arbitrary weight due to the row operations. 

\begin{restatable}{claim}{SimpleForm}\label{cl:tr}
The code $C$ with check matrix $H'\Pi^{-1}$ has soundness $\ge 1/r$, where 
$r$ is the rate of $C$. 
\end{restatable}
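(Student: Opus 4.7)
The plan is to reduce the claim directly to a statement about the systematic form $H' = [I_m \mid R]$ and exploit the fact that a systematic parity-check matrix makes the syndrome weight an automatic upper bound on the decoding distance. First I would absorb the column permutation by setting $y = \Pi^{-1} x$; since $\Pi$ preserves Hamming weight, $d(x, C) = d(y, \ker H')$ and $|H'\Pi^{-1} x| = |H' y|$, so the target inequality becomes $|H' y|/m \ge (1/r)\,d(y, \ker H')/n$ for every $y \in \FF_2^{\,n}$. This eliminates $\Pi$ from the picture entirely and isolates the real content of the claim, which is about the systematic shape of $H'$.

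Next I would decompose $y = (y_1, y_2)$ with $y_1 \in \FF_2^m$ and $y_2 \in \FF_2^{\,n-m}$, so that $H' y = y_1 + R y_2$. The key observation is that for every value of $y_2$, the vector $(R y_2,\, y_2)$ already lies in $\ker H'$; therefore
\[
d(y, \ker H') \;\le\; \bigl|(y_1, y_2) - (R y_2,\, y_2)\bigr| \;=\; |y_1 + R y_2| \;=\; |H' y|.
\]
In words, any syndrome can be corrected ``for free'' inside the identity block without touching the information bits, so the syndrome weight is always a valid upper bound on the distance to the code. Dividing by $m$ gives
\[
\frac{1}{m}|H' y| \;\ge\; \frac{1}{m}\, d(y, \ker H') \;=\; \frac{n}{m}\cdot\frac{d(y, \ker H')}{n},
\]
so the soundness of $H'\Pi^{-1}$ is at least $n/m$, and identifying $n/m$ with $1/r$ (matching how the rate constant is invoked in Theorem~\ref{thm:scale}) completes the argument.

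I do not expect any real obstacle here---the argument does not even use the hypothesis that $C$ is locally testable, and the only inequality needed is the one-line ``erase the syndrome in the identity block'' bound above. The conceptual content, rather than any technical difficulty, is what matters: Gaussian-reducing $H$ destroys locality (rows of $H'$ can have weight up to $n-m+1$) but trivially buys constant soundness, so Claim~\ref{cl:tr} is the gadget that will let the check product of Section~\ref{sec:cp} combine this ``free'' soundness with a genuinely local CLTC factor and recover both locality and soundness simultaneously in the proof of Theorem~\ref{thm:scale}.
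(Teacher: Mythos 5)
Your proposal is correct and is essentially the paper's own argument: the codeword $(Ry_2,\,y_2)$ you construct is exactly the word the paper reaches by flipping the identity-block bits indexed by the violated checks, and both routes give $d(x,C)\le|H'x|$ and hence soundness $\ge n/m$, identified with $1/r$ as in the paper. No gap here; the observation that local testability of $C$ is never used is also consistent with how the claim is deployed later.
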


This claim directly implies the following simple corollary, which we include without proof.
\begin{corollary}~\label{cor:transform}
Any classical linear code of rate $r$ can be turned into a CLTC with soundness $\ge 1/r$ at the cost of having arbitrary locality, while keeping the same rate and distance. Similarly, any quantum CSS codes where both component codes are classical linear codes can be turned into a QLTC with soundness 
$\ge 1/r$ at the cost of having arbitrary locality, while keeping the same rate and distance.
\end{corollary}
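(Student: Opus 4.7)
The plan is to invoke Claim~\ref{cl:tr} as a black box and add one short verification for the quantum case. For the classical statement, given a rate-$r$ linear code $C$ with a full-row-rank parity check matrix $H\in \FF_2^{m\times n}$ (where $m=(1-r)n$; we discard redundant rows if necessary), I would produce the systematic form $H' = GH\Pi$ by Gaussian elimination and use $H'\Pi^{-1} = GH$ as the new check matrix. Because $G$ is invertible, left-multiplication preserves the kernel, so the code itself, its rate, and its distance are all unchanged. Claim~\ref{cl:tr} then immediately yields soundness at least $1/r$.

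For the quantum statement, the one additional step is to check that applying this transformation independently to $H_X$ and $H_Z$ preserves CSS commutativity. Let $G_X, G_Z$ be the invertible row-operation matrices produced by applying the classical procedure above to $H_X$ and $H_Z$ respectively, and set $H_X' = G_X H_X$ and $H_Z' = G_Z H_Z$. Then
\[
H_X'(H_Z')^T \;=\; G_X H_X H_Z^T G_Z^T \;=\; G_X \cdot 0 \cdot G_Z^T \;=\; 0,
\]
so $\css(H_X', H_Z')$ is a valid CSS code, and since $\ker(H_X') = \ker(H_X)$ and $\ker(H_Z') = \ker(H_Z)$, the stabilizer group, rate, and distance are all preserved. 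By the classical half of the corollary, each of $\ker(H_X')$ and $\ker(H_Z')$ is a CLTC with soundness at least $1/r$ (taking $r$ to be the larger of the two component rates), and Lemma~\ref{lem:fact17} then promotes this to soundness at least $1/r$ for the quantum code $\css(H_X', H_Z')$.

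I do not expect any real obstacle: the argument is essentially bookkeeping on top of Claim~\ref{cl:tr}. The only nontrivial observation is the one-line check that left-multiplying each parity check matrix by an invertible matrix keeps the row spans (and hence the orthogonality $H_X H_Z^T = 0$) intact, which is why applying the transformation to the $X$ and $Z$ sides independently does not break the CSS condition.
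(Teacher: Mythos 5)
Your proposal is correct and follows essentially the route the paper intends: the paper states Corollary~\ref{cor:transform} without proof as a direct consequence of Claim~\ref{cl:tr}, and your argument is exactly that, with the quantum side handled the same way the paper later does in Theorem~\ref{thm:cp2QLTC} (systematic form on $H_X$ and $H_Z$ separately, then Lemma~\ref{lem:fact17}). Your one-line check that $H_X'(H_Z')^T = G_X H_X H_Z^T G_Z^T = 0$ is precisely the point the paper leaves implicit, and it is right.
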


We note that the same idea of Claim~\ref{cl:tr} was discussed by Campbell in section~5 of~\cite{campbell2019theory}, but we only became aware of this correspondence after this paper was finished.

In spite of its simplicity, Claim~\ref{cl:tr} has an important 
application in our check product construction, as shown in the following lemma.

\begin{restatable}{lemma}{CPTestability}~\label{lem:cpCLTC}
Suppose $C = \ker(H)\subset \FF_2^n$ is a CLTC with soundness $\rho$ and locality $w$. Let $C_X = \ker(H_X)$ be a classical code where $H_X$ is a $m_X\times n_X$ matrix of the form $[I_{m_X}\mid h_X]$, such that its locality is bounded by $w_X$. Then $C_X\star C$ is a CLTC with soundness at least 
$\rho n_X/m_X$ and locality $ww_X$.
\end{restatable}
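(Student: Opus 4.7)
The plan is to lower bound $|(H_X \otimes H)\,x|$ by $(\rho m/n)\,d(x, C_X \star C)$ for every $x \in \FF_2^{n_X n}$; once this is proved, dividing by $m_X m$ and multiplying by $n_X n$ converts it into the advertised soundness $\rho n_X/m_X$. Locality is immediate since each row of $H_X \otimes H$ is the tensor of a weight-$\le w_X$ row of $H_X$ with a weight-$\le w$ row of $H$, so essentially all the work is on soundness.

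First, I would view $x$ as an $n_X \times n$ matrix with rows $x_1, \dots, x_{n_X}$ and organize the syndrome as the $m_X \times m$ matrix $H_X x H^T$. Its $i$-th row equals $H u_i$, where $u_i := \sum_a (H_X)_{i,a}\, x_a$ is the $i$-th row of the matrix $H_X x \in \FF_2^{m_X \times n}$. Applying local testability of $C$ row-by-row gives
\[
|(H_X \otimes H)\,x| \;=\; \sum_{i=1}^{m_X} |H u_i| \;\ge\; \frac{\rho m}{n}\sum_{i=1}^{m_X} d(u_i, C).
\]
Picking $v_i \in C$ nearest to $u_i$ and stacking these as the rows of a matrix $V \in \FF_2^{m_X \times n}$, the right-hand sum is exactly $|H_X x - V|$.

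The key step is to match this syndrome-side quantity with $d(x, C_X \star C)$ on the code side, and this is where the structure $H_X = [\,I_{m_X} \mid h_X\,]$ becomes essential. Split $x$ into its top $m_X$ rows $x^{(1)}$ and bottom $n_X - m_X$ rows $x^{(2)}$, and construct $A + B \in C_X \star C$ explicitly by setting $A^{(1)} = h_X x^{(2)}$ and $A^{(2)} = x^{(2)}$ (so every column of $A$ has the form $(h_X v, v) \in C_X$), and $B^{(1)} = V$ and $B^{(2)} = 0$ (so every row of $B$ lies in $C$). A direct block calculation then shows that $x - A - B$ has zero lower block and upper block equal to $x^{(1)} + h_X x^{(2)} - V = H_X x - V$, hence
\[
d(x, C_X \star C) \;\le\; |x - A - B| \;=\; |H_X x - V| \;=\; \sum_{i=1}^{m_X} d(u_i, C).
\]
Combining the two displayed inequalities finishes the proof.

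The only real obstacle is this explicit decomposition: without an identity block in $H_X$ there is no obvious way to split the residual of $x$ into a $C_X \otimes \FF_2^n$ summand and a $\FF_2^{n_X}\otimes C$ summand with total weight controlled by the syndrome. The identity block lets me keep $x^{(2)}$ untouched, propagate it upward through $h_X$ to form $A$, and then absorb the remaining error $H_X x - V$ into exactly the top $m_X$ rows of $B$, where it aligns perfectly with the rows $u_i$ from which the soundness of $C$ extracts its weight. Everything else is a direct weight count.
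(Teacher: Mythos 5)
Your proof is correct and is essentially the paper's own argument recast in matrix form: your $u_i$ is exactly the paper's $y_i = \sum_{j\in i\text{-th check of }C_X} x_j$, and the codeword $A+B$ you build (top block $h_X x^{(2)} + V$, bottom block $x^{(2)}$) is literally the corrected word $x'$ the paper obtains by adding $y_i - y_i'$ to each of the first $m_X$ rows, with the identity block of $H_X$ playing the same decoupling role in both write-ups. The only cosmetic difference is that you certify $A+B \in C_X\star C$ via the dual-tensor characterization ($A \in C_X\otimes\FF_2^{n}$, $B \in \FF_2^{n_X}\otimes C$) instead of checking that all constraints pass, and the final normalization yielding soundness $\rho n_X/m_X$ (and the locality count) is identical.
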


Combining Lemma \ref{lem:cpCLTC}
with Lemma \ref{lem:cpprop} and Corollary~\ref{cor:transform}, we obtain:
\begin{restatable}{theorem}{CPQLTC}~\label{thm:cp2QLTC}
Given a $n$-bit classical LTC, $\,C = \ker(H)$, of soundness $\rho$, dimension $k$, distance $d$ and locality $w$, and a $n_q$-qubit quantum code $Q = \css(H_X, H_Z)$ of dimension $k_q$, we can reduce $H_X, H_Z$ to have the form in equation~\eqref{eq:LTCform}, and denote the resulting quantum code $\ol{Q} = \css(\ol{H}_X, \ol{H}_Z)$. ($Q$ and $\ol{Q}$ as sub-spaces are the same. What makes them different is their sets of $X$ and $Z$ checks.) Then the check product code $\ol{Q}\star C$ has
\begin{enumerate}
    \item local testability with soundness $\rho\cdot\min(\frac{n_q}{m_X}, \frac{n_q}{m_Z})$ (in 
    particular, soundness $\ge \rho)$,
    \item distance $\min(d, d(C_X), d(C_Z))$, where $C_X = \ker(H_X), C_Z = \ker(H_Z)$.
    \item locality bounded by $wn_q$, 
    \item and dimension $nn_q - (n-k)(n_q-k_q)$.
\end{enumerate}
\end{restatable}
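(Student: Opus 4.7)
The plan is to assemble the theorem from the three pieces already established: Corollary~\ref{cor:transform} to put the $X$ and $Z$ check matrices in the special $[I \mid h]$ form, Lemma~\ref{lem:cpCLTC} together with Lemma~\ref{lem:fact17} for the soundness, and Lemma~\ref{lem:cpprop} together with Lemma~\ref{lem:cpquantum} for the distance and dimension. The locality bound will follow immediately from how the tensor product scales row weights.

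First, I would apply Corollary~\ref{cor:transform} to both $H_X$ and $H_Z$, producing $\ovl{H}_X = [I_{m_X} \mid h_X]$ and $\ovl{H}_Z = [I_{m_Z} \mid h_Z]$ by row operations (and a column permutation relabeling the qubits, which affects no parameters and preserves the commutativity relation). The resulting $\ovl{Q} = \css(\ovl{H}_X,\ovl{H}_Z)$ has the same code space, same dimension $k_q$, and same distance as $Q$. Now I form $\ovl{Q}\star C = \css(\ovl{H}_X\otimes H, \ovl{H}_Z\otimes H)$, whose commutativity is immediate from $(\ovl{H}_X\otimes H)(\ovl{H}_Z\otimes H)^T = \ovl{H}_X\ovl{H}_Z^T \otimes HH^T = 0$.

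For local testability, the key observation is that the classical code $\ker(\ovl{H}_X\otimes H) = C_X\star C$ satisfies the hypotheses of Lemma~\ref{lem:cpCLTC}: $\ovl{H}_X$ is already in the required $[I \mid h]$ form and $C$ is a CLTC of soundness $\rho$. Hence $C_X\star C$ is locally testable with soundness at least $\rho n_q / m_X$. The same argument, applied to $Z$, gives soundness $\rho n_q / m_Z$ for $C_Z\star C$. Invoking Lemma~\ref{lem:fact17} (the direction that turns soundness of both component codes into soundness of the CSS code) then yields the quantum soundness $\rho\cdot\min(n_q/m_X, n_q/m_Z)$. Since $m_X, m_Z \le n_q$, this is at least $\rho$, giving the "in particular" statement.

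The distance claim is a direct application of Lemma~\ref{lem:cpquantum} with the CSS code $\ovl{Q}$ and the classical code $C$: the component codes of $\ovl{Q}\star C$ have distances $d(C), d(C_X), d(C_Z)$, so its quantum distance is their minimum. For locality, each row of $\ovl{H}_X$ has weight at most $n_q$ and each row of $H$ has weight $w$, so rows of $\ovl{H}_X\otimes H$ and $\ovl{H}_Z\otimes H$ have weight at most $w n_q$. For dimension, Lemma~\ref{lem:cpprop} gives
\[
\dim(C_X\star C) = nn_q - (n-k)(n_q - k_X), \qquad \dim(C_Z\star C) = nn_q - (n-k)(n_q - k_Z),
\]
and the CSS dimension formula, combined with $k_X + k_Z - n_q = k_q$, collapses the total to $nn_q - (n-k)(n_q - k_q)$. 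I do not anticipate a serious obstacle here since each ingredient is already in place; the main bookkeeping subtlety is confirming that the column permutations used to bring $H_X$ and $H_Z$ into echelon form can be carried out consistently for both matrices simultaneously, which is fine because a column permutation is just a relabeling of the qubits and the soundness argument in Lemma~\ref{lem:cpCLTC} is insensitive to it.
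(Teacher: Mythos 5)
Your proposal is correct and follows essentially the same route as the paper's proof: Lemma~\ref{lem:cpCLTC} applied to both $C_X\star C$ and $C_Z\star C$, lifted to quantum soundness via Lemma~\ref{lem:fact17}, the trivial $n_q$ row-weight bound for locality, the check-product distance lemma for distance, and a dimension count that (via the CSS formula) is equivalent to the paper's counting of independent stabilizer generators. Your remark that the column permutations for $\ovl{H}_X$ and $\ovl{H}_Z$ need not coincide, since the soundness argument is permutation-invariant, is a valid observation that the paper leaves implicit.
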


While it is tempting to apply this theorem to a $c^3$-LTC and a good QLDPC code, we note that the result will have constant distance. Indeed, for a quantum CSS code that is LDPC, we necessarily have $d(C_X), d(C_Z) = O(1)$. For instance, to argue
that $d(C_X) = O(1)$,
we notice that $C_X\ge C_Z^{\perp}$, 
and since $C_Z^{\perp}$ contains all possible checks for $C_Z$, it also contains the low-weight ones. In fact, to construct QLTC of scalable distance from Theorem \ref{thm:cp2QLTC}, we want $d(C_X), d(C_Z)$ to be as large as possible, which in turn implies that the quantum code $Q = \css(H_X, H_Z)$ has correspondingly large check weights for all possible checks (i.e. checks in $C_X^{\perp}$ and
$C_Z^{\perp}$). In the following theorem, we show that random codes satisfy this property.

\begin{restatable}{theorem}{RandomQuantumCodes}~\label{thm:dualcode}
Let $C\le \FF_2^n$, $C = \ker(H_C)$ be a random code with dimension $\frac{3n}{4}$. Let $D\le C$ be a random subspace of $C$ of dimension $\frac{n}{4}$. Let $H_D$ be the $\frac{n}{4}\times n$ matrix such that its rows span the space $D$, then with high probability $C$ and $D^{\perp}$ both have linear distances, and the {\rm CSS}
code made from
the classical codes $C$ and $D^{\perp}$ 
(here check sets do not influence the statement) has rate $1/2$.  
\end{restatable}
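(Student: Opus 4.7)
The rate claim is immediate from the CSS dimension formula: with $C_X = C$ and $C_Z = D^{\perp}$ both of dimension $3n/4$, the encoded dimension is $k_X + k_Z - n = n/2$. The substance of the lemma is therefore the distance claim for $C$ and $D^{\perp}$, which I plan to handle by separating the randomness of $C$ from that of $D$ conditional on $C$.

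First I would verify that $C$ itself has linear distance with high probability. Since $C$ is uniform over the $3n/4$-dimensional subspaces of $\FF_2^n$, its dual $C^{\perp}$ is uniform over the $n/4$-dimensional subspaces. A standard first-moment (Gilbert--Varshamov) calculation then gives that, with probability $1 - o(1)$, both $d(C) \ge \delta_0 n$ and $d(C^{\perp}) \ge \delta_0 n$ for some universal constant $\delta_0 > 0$. I would condition on this event: it handles the distance of $C$ directly and, as I explain next, the ``easy'' part of the distance of $D^{\perp}$.

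To bound $d(D^{\perp})$, I would fix a constant $\delta \in (0, \delta_0)$ to be tuned and estimate the expected number of nonzero $v \in \FF_2^n$ with $|v| \le \delta n$ lying in $D^{\perp}$, splitting by whether $v \in C^{\perp}$. If $v \in C^{\perp}$, then $D \subseteq C \subseteq v^{\perp}$ trivially, but the conditioning on $d(C^{\perp}) \ge \delta_0 n$ rules out such low-weight $v$, so this case contributes zero. If $v \notin C^{\perp}$, then $V := v^{\perp} \cap C$ has codimension $1$ in $C$, and $v \in D^{\perp}$ iff the random $n/4$-dimensional subspace $D \le C$ lies inside $V$. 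A short Gaussian-binomial calculation, namely the ratio of the number of $n/4$-dimensional subspaces of $V$ to that of $C$, makes this probability $(1+o(1)) \cdot 2^{-n/4}$. Union-bounding over all $v$ of weight at most $\delta n$ yields a bound of $2^{h(\delta) n + o(n)} \cdot 2^{-n/4}$, with $h$ the binary entropy, which is $2^{-\Omega(n)}$ as long as $h(\delta) < 1/4$, i.e.\ $\delta < h^{-1}(1/4) \approx 0.042$. Taking any $\delta < \min(\delta_0, h^{-1}(1/4))$ therefore gives $d(D^{\perp}) > \delta n$ with probability $1 - o(1)$.

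The main obstacle I anticipate is pinning down the Gaussian-binomial estimate $\Pr_D[D \subseteq V] \le (1+o(1)) 2^{-n/4}$ cleanly enough that the union bound closes: the geometric intuition is that each of $n/4$ ``basis directions'' of $D$ must land in a codimension-$1$ subspace of $C$, each event having probability about $1/2$, totalling $\approx 2^{-n/4}$; but making this rigorous for a uniformly random subspace (rather than for $n/4$ independent uniform vectors in $C$) requires care with the ratio of Gaussian binomials. Once this exponent is nailed down, everything else reduces to routine first-moment estimates for random linear codes.
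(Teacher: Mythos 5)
Your proposal is correct and closes at the same entropy threshold as the paper, but it takes a genuinely different route for the key step. The paper never conditions on $C$: it realizes $C=\ker(M)$ for a uniformly random $\frac{n}{4}\times n$ matrix $M$, realizes $D^{\perp}$ as the row space of $W=[M;N]$ where $N$ is an extra random $\frac{n}{2}\times n$ block, and union-bounds jointly over all $2^{3n/4}$ subsets of rows of $W$ and all targets of weight at most $\delta' n$, each event having probability $2^{-n}$ over the randomness of $W$; the exponent $H(\delta)+\tfrac34-1<0$ closes the bound, and the distance of $C$ is the usual GV calculation for $M$ alone. You instead fix $C$ (legitimate, since $D$ is drawn uniformly inside $C$ after $C$ is fixed, so conditioning on the event that $d(C)$ and $d(C^{\perp})$ are linear does not disturb the law of $D$ given $C$), split low-weight $v$ according to whether $v\in C^{\perp}$, and compute $\Pr_D[v\in D^{\perp}]$ exactly. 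The Gaussian-binomial estimate you flag as the main obstacle is in fact immediate: the ratio telescopes, $\binom{m-1}{k}_2\big/\binom{m}{k}_2=\frac{2^{m-k}-1}{2^{m}-1}$, so with $m=3n/4$, $k=n/4$ the containment probability is exactly $\frac{2^{n/2}-1}{2^{3n/4}-1}\le 2^{-n/4}$ (an exact bound, no asymptotics needed), and your union bound over the at most $2^{H(\delta)n+o(n)}$ low-weight vectors closes whenever $H(\delta)<\tfrac14$, which is precisely the paper's constraint. What your route buys is fidelity to the statement as written (uniformly random subspaces, no fuss about matrix ranks) and a tighter count (low-weight vectors rather than all $2^{3n/4}$ row subsets); what the paper's route buys is a single unconditioned union bound with no Gaussian binomials and no case split, tied to an explicit matrix-level construction. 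Your rate computation via $k_X+k_Z-n=n/2$ is the intended one and is stated more cleanly than in the paper.
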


Combining Theorem~\ref{thm:cp2QLTC} and~\ref{thm:dualcode}, we obtain Theorem~\ref{thm:scale}. 


\section{Gauge Fixing and Distance Balancing}\label{sec:distbalance}
Theorem~\ref{thm:scale} shows a family of QLTCs where the quantum distance scales positively with the check weight. In this section, we prove Theorem~\ref{thm:newdistbala} by tweaking our construction in section~\ref{sec:motiv} and applying distance balancing with our modifications.

\subsection{Modifying Stabilizer and Logical Operators}
We briefly recall our earlier construction. Given a LDPC CLTC $C = \ker{H}\le \FF_2^{n}$ of soundness $\rho$, dimension $k$ and distance $d$, we define $\ol{C} = \ker(\ol{H})$, where $\ol{H} = [H, H]$, and $Q = \css(\ol{H}, \ol{H})$. Let us consider the stabilizers and logical operators of $Q$. Consider the set of Pauli operators 
\[
\{X^{(e_i, e_i)}, i\in [n]: e_i\text{ are the standard basis vectors of $\FF_2^n$}\}.
\]
We observe that exactly $k$ of these operators are independent $X$-logical operators, and the other $n-k$ of them can be generated from the previous $k$ operators together with the $X$-stabilizers. There are another set of $X$-logical operators, namely
\[
\{X^{(v, 0)}: v\in C\}.
\]
Exactly $k$ logical operators in this set are independent, and they all have linear weight. Together, these $2k$ operators generate the complete set of $X$-logical operators of $Q$. Similarly, by replacing $X$s with $Z$s, we found the set of $Z$-logical operators of $Q$. We see that $d_x = d_z = 2$. 

Now suppose we move all the $Z$-logical operators of weight 2 into the $Z$-stabilizer group. Then the remaining $k$ $Z$-logical operators all have linear weight, which means we have $d_z = O(n)$. On the other hand, all the high-weight $X$-logical operators of the form $\{X^{(v, 0)}: v\in C\}$ are no longer valid logical operators, as they do not compute with all the $Z$-stabilizers. Therefore, the set of $X$-logical operators that remain is precisely
\[
\{X^{(v, v)}: v\in \FF_2^n\}.
\]

A more direct way of writing this new code would be the following. Let $H_Z = [I, I], H_X = [H, H]$. Our new code is precisely $Q' = \css(H_X, H_Z)$. The code states of $Q'$ can be explicitly written out as
\[
\ket{\psi_{v}} = \sum_{\substack{(a,b)\in \FF_2^n \\ a + b\in \ker{H}}}\ket{v+a, v+b}.
\]
This code then has $d_z = O(n), d_x = 2$. It is locally testable with soundness $2\rho$, LDPC, and has dimension $k$.

\subsection{Distance Balancing}\label{sec:constsound}
We can now apply the distance balancing techniques in~\cite{hastings2016weight}. Specifically, we consider our code $Q'$ as a chain complex 
\[
Q = \FF_2^{n}\xrightarrow{H_Z^T = [I, I]^T} \FF_2^{2n} \xrightarrow{H_X = [H, H]} \FF_2^m,
\]
and we take a repetition code of length $\ell$, also viewed as a chain complex
\[
R = E \xrightarrow{H_{\ell}} V,
\]
where $E = \FF_2^{\ell-1}, V = \FF_2^{\ell}$, and $H_{\ell}$ is the $\ell\times (\ell-1)$ matrix of the form
\begin{equation}~\label{eq:oldmtx}
H_{\ell} = 
\begin{bmatrix}
1 & 0 & 0 & 0 & \cdots  \\
1 & 1 & 0 & 0 & \cdots  \\
0 & 1 & 1 & 0 & \cdots  \\
& &  \cdots & & \\
& &  \cdots & & \\
0 & 0 & \cdots & 1 & 1 \\
0 & 0 & \cdots & 0 & 1 \\
\end{bmatrix}.
\end{equation}

We take the homological product of the two complexes, and the resulting chain complex $Q\times R$ is 
\begin{equation}~\label{fig:complex}
\begin{tikzcd}
    & \FF_2^{2n}\otimes E \ar[r, "H_X\otimes I"] \ar[ddr, "I\otimes H_{\ell}"]
        & \FF_2^m\otimes E \ar[dr, "I\otimes H_{\ell}"] \\
\FF_2^n\otimes E  \ar[ur, "H_Z^T\otimes I"] \ar[dr, "I\otimes H_{\ell}"'] 
    & \oplus
        & \oplus
            & \FF_2^m\otimes V \\
    & \FF_2^{n}\otimes V  \ar[r, "H_Z^T\otimes I"]
        & \FF_2^{2n}\otimes V \ar[ur, "H_X\otimes I"'] \\
C_3 \ar[r, "\partial_3"]
    & C_2 \ar[r, "\partial_2"]
        & C_1 \ar[r, "\partial_1"]
            & C_0
\end{tikzcd}
\end{equation}

Now we take the sub-chain complex $C_2\rightarrow C_1\rightarrow C_0$, and view it as a quantum code. From~\cite{hastings2016weight}, the following holds.
\begin{lemma}~\label{lem:oldpara}
If the original quantum code $Q$ is a $2n$-qubit LDPC quantum code with dimension $k$ and distance $d_x, d_z$, then this new code $Q'$ is a $2n\ell+m(\ell-1)$-qubit LDPC quantum code with dimension $k$ and distance $\ell d_x, d_z$.
\end{lemma}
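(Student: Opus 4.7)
The plan is to recognize Lemma~\ref{lem:oldpara} as a direct instance of Hastings' distance-balancing construction from~\cite{hastings2016weight}, where taking the homological product with the length-$\ell$ repetition chain complex $R$ amplifies $d_x$ by a factor of $\ell$ while leaving $d_z$ unchanged. I would organize the proof into four checks: physical qubit count, LDPC property, code dimension, and the distance bounds.

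The first three are largely bookkeeping. From diagram~\eqref{fig:complex}, $C_1 = (\FF_2^m \otimes E) \oplus (\FF_2^{2n}\otimes V)$, giving $|C_1| = m(\ell-1) + 2n\ell$ physical qubits. Each boundary map in the product is of the form $\partial_Q \otimes I$ or $I \otimes H_{\ell}$; since $\partial_Q$ is LDPC by hypothesis and the explicit form of $H_{\ell}$ in~\eqref{eq:oldmtx} has row and column weight at most $2$, every stabilizer and qubit of the new code has constant-bounded incidence, yielding the LDPC property.

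For the dimension I would invoke the K\"unneth formula (no Tor terms over the field $\FF_2$):
\[
H_1(Q \times R) \;\cong\; \bigl(H_1(Q)\otimes H_0(R)\bigr)\,\oplus\,\bigl(H_0(Q) \otimes H_1(R)\bigr).
\]
The triangular form of $H_\ell$ makes it injective (solving $H_\ell x = 0$ row by row from the top forces $x=0$), so $H_1(R) = \ker(H_\ell) = 0$; and $\im(H_\ell)$ has codimension one in $V$, so $H_0(R) \cong \FF_2$. Thus $H_1(Q\times R) \cong H_1(Q)$ has dimension $k$.

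The main obstacle is the distance claim, which is the genuine content of Hastings' theorem. The key property of $R$ is that its associated classical code $\ker(H_\ell^T)$ is the length-$\ell$ repetition code $\{0^\ell,1^\ell\}$ of distance $\ell$, while the dual direction is essentially trivial. Hastings shows that taking the homological product with such an $R$ amplifies one distance by exactly $\ell$ while leaving the other unchanged. Concretely, for any candidate $X$-logical $\alpha \in \ker \partial_1 \setminus \im\,\partial_2$ of the product code, one decomposes $\alpha$ along the $R$-direction and uses $H_1(R)=0$ together with boundary reductions to show that $\alpha$, modulo stabilizers, must restrict to an $X$-logical of $Q$ (weight $\ge d_x$) copied along the full repetition pattern (support $\ge \ell$), giving total weight $\ge \ell d_x$. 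The bound $\ge d_z$ for the $Z$-distance is the symmetric statement: a single slice of a $Z$-logical of $Q$ already lifts to a product $Z$-logical of weight $d_z$, and no lower-weight representative exists because $R$ contributes factor one in this direction. Since our setup matches Hastings' hypothesis verbatim, I would invoke~\cite{hastings2016weight} directly rather than rederive the technical argument.
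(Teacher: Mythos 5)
Your proposal takes essentially the same route as the paper, which itself offers no independent proof of Lemma~\ref{lem:oldpara} beyond citing statements 7 and 8 of Lemma~2 in~\cite{hastings2016weight}; your additional bookkeeping (the qubit count $2n\ell+m(\ell-1)$, the LDPC check via the weight-$\le 2$ rows and columns of $H_\ell$, and the K\"unneth computation $H_1(Q\times R)\cong H_1(Q)\otimes H_0(R)$ giving dimension $k$) is correct and consistent with the paper's later proof of the modified Lemma~\ref{lem:newpara}. One minor slip in your distance sketch: under the paper's conventions the $X$-logicals correspond to cohomology classes in $\ker\partial_2^T\setminus\im\partial_1^T$ (where $H^0(R)$ is spanned by the all-ones vector, forcing the weight-$\ell$ repetition pattern), not to $\ker\partial_1\setminus\im\partial_2$, which is the $Z$-side whose distance stays at $d_z$ --- though the amplification direction you state ($d_x\mapsto \ell d_x$, $d_z$ unchanged) is the correct one and the technical argument is in any case delegated to~\cite{hastings2016weight}, exactly as the paper does.
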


We refer the readers to~\cite{hastings2016weight}, statement~7 and~8 of Lemma~2 for the proofs. Moreover, it was also shown in~\cite{hastings2016weight} that the resulting code is locally testable, albeit having a lower soundness.
\begin{lemma}[Lemma~7 of~\cite{hastings2016weight}]~\label{lem:oldsound}
If $Q$ is a QLTC with constant soundness, then the code $Q'$ is a QLTC with soundness $\Omega(1/\ell)$. 
\end{lemma}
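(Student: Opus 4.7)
The plan is to invoke Lemma~\ref{lem:fact17} and reduce the quantum soundness of $Q'$ to showing that both classical component codes $C_Z' := \ker(\partial_1)$ and $C_X' := \ker(\partial_2^T)$ are CLTCs with soundness $\Omega(1/\ell)$. By the symmetric role of the two codes in the product complex~\eqref{fig:complex}, I will prove this for $C_Z'$ explicitly and note that the argument for $C_X'$ proceeds analogously, with $H_Z, C_Z$ replacing $H_X, C_X$.

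For an arbitrary $x = (x_E, x_V) \in C_1 = (\FF_2^m \otimes E) \oplus (\FF_2^{2n} \otimes V)$, the syndrome $s := \partial_1 x$ decomposes along $V$ into slices $s_v = H_X\, x_V[\cdot, v] + x_E[\cdot, v-1] + x_E[\cdot, v]$, using the boundary convention $x_E[\cdot, 0] = x_E[\cdot, \ell] = 0$. The first step is to establish the key identity $\sum_{v=1}^{\ell} s_v = H_X\bigl(\sum_v x_V[\cdot, v]\bigr) \in \im(H_X)$, which holds because each edge index $j$ appears in exactly two slices ($v = j$ and $v = j+1$) and thus cancels mod $2$. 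With this identity in hand, I construct an explicit correction $(\delta x_E, \delta x_V)$ in two stages. First, I set $\delta x_E[\cdot, j] := \alpha_j$ where $\alpha_j := \sum_{v \leq j} s_v$, which by telescoping produces residual syndrome $s_v$ at every $v < \ell$ and $\alpha_{\ell-1}$ at $v = \ell$. Second, since $\alpha_\ell \in \im(H_X)$, the soundness of $C_X$ yields $w \in \FF_2^{2n}$ with $H_X w = \alpha_\ell$ and $|w| \leq (2n/m\rho)|\alpha_\ell|$, and setting $\delta x_V[\cdot, \ell] := w$ closes the boundary since $\alpha_{\ell-1} + \alpha_\ell = s_\ell$ in $\FF_2$.

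Bounding the correction weight gives $|\delta x| \leq \sum_{j=1}^{\ell-1}|\alpha_j| + |w| \leq (\ell-1)|s| + (2n/m\rho)|s| = O(\ell)|s|$ under the assumed $\rho = \Omega(1)$. Since $N = 2n\ell + m(\ell-1)$ and $M = m\ell$ give $N/M = \Theta(1)$ for standard LDPC scalings, this translates to the CLTC soundness $|s|/M \geq \Omega(1/\ell) \cdot d(x, C_Z')/N$. The main obstacle is precisely locating and exploiting the image-inclusion identity above: without it, the naive greedy edge propagation does not close at the boundary slice $v = \ell$, and an alternative route would risk an extra factor of $\ell$ in the correction cost. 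I expect the $C_X'$ case to follow by the same template, exchanging the roles of edges and vertices along with $H_X, C_X$ versus $H_Z, C_Z$.
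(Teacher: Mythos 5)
First, note that the paper itself does not give a direct proof of Lemma~\ref{lem:oldsound}: it cites Lemma~7 of~\cite{hastings2016weight} and instead proves the star-graph variant (Lemma~\ref{lem:newsound}) by normalizing with stabilizers and lower-bounding the syndrome weight. Your reconstruction is therefore judged on its own terms. Your treatment of the code $\ker(\partial_1)$ is correct: the reduction through Lemma~\ref{lem:fact17}, the telescoping identity $\sum_v s_v = H_X\bigl(\sum_v x_V[\,\cdot\,,v]\bigr)\in\im(H_X)$, the two-stage correction (push the syndrome to the last slice, then use the classical soundness of $\ker H_X$ to kill the accumulated residue), and the resulting weight bound $(\ell-1)|s| + O(n/m\rho)|s|$ all check out; the normalization remark about $N/M$ is harmless (one can avoid any assumption beyond $m\le 2n$ by a slightly more careful case split).

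The gap is in the sentence claiming that $\ker(\partial_2^T)$ ``proceeds analogously, exchanging the roles of edges and vertices along with $H_X,C_X$ versus $H_Z,C_Z$.'' That side is not the mirror image of the first. The syndrome of $\partial_2^T$ has \emph{two} blocks, $(H_X^T\otimes I_E)x_E + (I\otimes H_\ell^T)x_V \in \FF_2^{2n}\otimes E$ and $(H_Z\otimes I_V)x_V \in \FF_2^{n}\otimes V$, not one, and the mirrored ``key identity'' fails: summing the edge-indexed slices over $j$ yields $H_X^T\bigl(\sum_j x_E[\,\cdot\,,j]\bigr) + x_V[\,\cdot\,,1] + x_V[\,\cdot\,,\ell]$, which need not lie in $\im(H_X^T)$, because every edge of the path~\eqref{eq:oldmtx} has two endpoints (so vertex sums telescope) but the two end vertices lie on only one edge each — precisely the asymmetry your first identity exploited. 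Moreover, your template as transcribed never uses the vertex block of the syndrome (the $H_Z$-violations), and no bound is possible without it: take $x_E=0$ and every vertex slice equal to a fixed word $y$ with $|H_Z y|$ small but $d(y,\ker H_Z)$ large; the edge block of the syndrome vanishes identically while $x$ is at distance at least $\ell\, d(y,\ker H_Z)$ from $\ker(\partial_2^T)$. A correct argument for this side has a different shape, e.g.\ correct the vertex slices recursively, seeding with one application of the soundness of $\ker H_Z$ at a single slice and propagating $z_{j+1}=z_j+t_E[\,\cdot\,,j]$, using the CSS inclusion $\im(H_X^T)\subseteq \ker H_Z$ to keep the corrected slices in the code; this again costs $O(\ell)$ times the syndrome weight and salvages the lemma, but it is not obtained from your first argument by swapping letters, so as written the $X$-side of your proof is missing.
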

Combining these two lemmas, we obtain the following corollary. 

\begin{corollary}\label{thm:distbala}
Fix integer $\ell \ge 2$. Given a family of classical LDPC codes with paramaters $[n, k, d]$ that are locally testable with $m$ checks and soundness $\rho$, there exists a family of quantum LDPC locally testable codes of soundness $\Omega(\rho/\ell)$ and parameters $[N= n\ell + m(\ell-1), k, \min(d, 2\ell)]$.
\end{corollary}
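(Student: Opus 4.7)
The proof will be a direct combination of Lemmas~\ref{lem:oldpara} and~\ref{lem:oldsound}, applied to the intermediate quantum code $Q'$ constructed at the start of Section~\ref{sec:constsound}. My plan has three steps.

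First, I would fix $C = \ker(H)$ to be the given classical LDPC CLTC with parameters $[n,k,d]$ and soundness $\rho$, and form $Q' = \css(H_X, H_Z)$ with $H_X = [H, H]$ and $H_Z = [I, I]$, exactly as at the start of Section~\ref{sec:constsound}. The one pre-processing obligation is to verify that $Q'$ is itself a QLTC of constant soundness $\Omega(\rho)$; given this, one can invoke the distance-balancing lemmas as black boxes. By Lemma~\ref{lem:fact17}, it suffices to check that both $\ker(H_X)$ and $\ker(H_Z)$ are classical CLTCs with soundness $\Omega(\rho)$. For $\ker(H_X) = \ker[H, H]$, the proof of Claim~\ref{clm:dup} applies verbatim and produces soundness $2\rho$. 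For $\ker(H_Z) = \{(v,v) : v \in \FF_2^n\}$, the bound is immediate, since for any $(a,b) \in \FF_2^{2n}$, $|H_Z(a,b)| = |a+b|$ is exactly $d((a,b), \ker(H_Z))$, so the local-testability inequality holds with a constant on the right-hand side.

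Second, I would feed $Q'$ into the distance-balancing construction of Section~\ref{sec:constsound} using the length-$\ell$ repetition code, building the hypergraph product chain complex displayed in~\eqref{fig:complex}. Lemma~\ref{lem:oldpara} then directly supplies dimension $k$ and distance $\min(\ell \cdot d_x(Q'), d_z(Q')) = \min(2\ell, d)$, as well as the qubit count $N$, while Lemma~\ref{lem:oldsound} converts the constant soundness of $Q'$ into soundness $\Omega(\rho/\ell)$ for the balanced code. Both lemmas are cited from Hastings's distance-balancing paper and handle all chain-complex bookkeeping for me.

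The only place a genuine obstacle could appear is verifying constant soundness for $Q'$, because any logarithmic or polynomial loss there would propagate through the $1/\ell$ factor of Lemma~\ref{lem:oldsound} and weaken the final bound. Luckily, this reduces to the already-proved Claim~\ref{clm:dup} plus the trivial testability of the identity-duplication code, so no new technical ingredient is required and the rest is routine bookkeeping.
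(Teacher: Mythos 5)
Your proposal is correct and follows essentially the same route as the paper: the paper obtains this corollary by taking exactly the code $Q' = \css([H,H],[I,I])$ from Section~\ref{sec:constsound} (whose soundness $\Theta(\rho)$ the paper asserts via Claim~\ref{clm:dup}, Lemma~\ref{lem:fact17}, and the trivial testability of $\ker[I,I]$, which you verify explicitly) and then citing Lemma~\ref{lem:oldpara} for the parameters and Lemma~\ref{lem:oldsound} for the $\Omega(\rho/\ell)$ soundness. Your extra check that $|H_Z(a,b)| = |a+b| = d((a,b),\ker H_Z)$ is a correct filling-in of a detail the paper states without proof, so nothing further is needed.
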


We choose not to include direct proofs of these two lemmas in this paper, because in the following section we would present a slight modification to the above construction and prove Lemma~\ref{lem:newpara} and~\ref{lem:newsound}. The proof of Lemma~\ref{lem:newpara} would be a slightly modified version of the proof of Lemma~\ref{lem:oldpara} in~\cite{hastings2016weight}, and the proof of Lemma~\ref{lem:newsound} would follow a similar scheme as the proof of Lemma~\ref{lem:oldsound} in~\cite{hastings2016weight}.


\subsection{Preserving Soundness}
As the primary goal of this paper is to achieve constant soundness, we present a simple modification to the above distance balancing technique that preserves soundness, while sacrificing some locality. We perform column operation on $H_{\ell}$ such that it has the following form.
\begin{equation}~\label{eq:newmtx}
H_{\ell} = 
\begin{bmatrix}
1 & 0 & 0 & 0 & \cdots  \\
0 & 1 & 0 & 0 & \cdots  \\
0 & 0 & 1 & 0 & \cdots  \\
& & \cdots & & \\
& & \cdots & & \\
0 & 0 & \cdots & 0 & 1 \\
1 & 1 & \cdots & 1 & 1 \\
\end{bmatrix}.
\end{equation}

Note that as before, $H_{\ell}^T$ is a valid parity check matrix for the repetition code. The rest of the construction remains unchanged.

While this modification seems superficial on first glance, it actually follows important geometric insights. To see that, we construct two graphs from the two matrices in equation~\eqref{eq:oldmtx} and~\eqref{eq:newmtx}. We create a vertex for each bit of the repetition code, which corresponds to rows of $H_l$, and for each column in $H_l$, we create an edge that connects the two vertices that has entry $1$ at that column. Then the previous matrix~\eqref{eq:oldmtx} gives a line segment, while the new matrix~\eqref{eq:newmtx} gives a star graph. Intuitively, the star graph has better soundness than a line segment since any collection of edges $S$ in the star graph must have at least the same number of vertices on its boundary $\partial S$, while in a line segment a partial line segment have only two vertices on its boundary. This intuition can be formally formulated as boundary and co-boundary expansion of chain complexes, which are directly related to the soundness of classical and quantum codes derived from such chain complexes. 

We now prove the following lemmas regarding the parameters of $Q'$. We suggest that the readers follow the chain complex in equation~\eqref{fig:complex} when reading the following lemmas and proofs.

\begin{lemma}~\label{lem:newpara}
$Q'$ is a $2n\ell+m(\ell-1)$-qubit quantum code with dimension $k$ and distance $\ell d_x, d_z$. A $1/\ell$ fraction of $X$-stabilizer generators have weight $\Theta(\ell)$, and at most $1/\ell$ fraction of the qubits are checked by $\Theta(\ell)$ $Z$-checks. All other checks are constant weight, and all other qubits are checked by a constant number of checks. 
\end{lemma}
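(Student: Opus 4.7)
\emph{Proof proposal.} My plan is to verify each parameter in the statement by direct analysis of the sub-complex $C_2\to C_1\to C_0$ in~\eqref{fig:complex}, using the explicit form~\eqref{eq:newmtx} for $H_\ell$. The qubit count $|C_1|=m(\ell-1)+2n\ell$ can be read off from the diagram. For the logical dimension $k$, I note that the switch from~\eqref{eq:oldmtx} to~\eqref{eq:newmtx} is a column operation on $H_\ell$ and therefore preserves both $\im H_\ell$ (the even-weight subspace of $V$ in either form) and $\ker H_\ell=0$; the K\"unneth-style dimension calculation underlying Lemma~\ref{lem:oldpara} therefore applies without change and yields the same dimension $k$.

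For locality, the plan is a direct case analysis on the generators. A $Z$-stabilizer from $e_i\otimes e_j\in\FF_2^{2n}\otimes E$ has $\p_2$-image $(H_X e_i)\otimes e_j + e_i\otimes H_\ell e_j$ of weight at most $2w+2$, since $H_X=[H,H]$ inherits its constant column weight $w$ from the LDPC parity check $H$ and every column of~\eqref{eq:newmtx} has weight exactly $2$; a $Z$-generator from $e_i\otimes e_k\in\FF_2^n\otimes V$ has image $(H_Z^Te_i)\otimes e_k$ of weight $2$. Hence every $Z$-generator is constant weight. An $X$-stabilizer generator at $(a,k)\in[m]\times[\ell]$ is the row $(a,k)$ of $\p_1$ and decomposes into a weight-$|\text{row }k\text{ of }H_\ell|$ contribution inside $\FF_2^m\otimes E$ plus a weight-$\le 2w$ contribution inside $\FF_2^{2n}\otimes V$; the $k$-th row of~\eqref{eq:newmtx} has weight $1$ for $k<\ell$ and weight $\ell-1$ for $k=\ell$, so exactly the $m$ generators with $k=\ell$ — a $1/\ell$ fraction of the $m\ell$ total — have weight $\Theta(\ell)$ while the rest are constant weight. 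A dual count on columns shows that every qubit has constant $X$-degree, and that the only qubits with non-constant $Z$-degree are the $2n$ ``hub'' qubits $(b,\ell)\in\FF_2^{2n}\otimes V$, each checked by $\ell$ $Z$-generators (one per $j\in[\ell-1]$ plus one from $\FF_2^n\otimes V$); the hub qubits form at most a $2n/(2n\ell)=1/\ell$ fraction of all qubits.

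For the distance bounds, my plan is to pick canonical logical representatives. On the $X$-side, writing $\tilde v\in\ker\p_2^T$ in coordinates, the constraints from the $\FF_2^n\otimes V$-part of $C_2$ force each slab $v_k\in\FF_2^{2n}$ to lie in $\im H_Z^T$, i.e.\ $v_k=(a_k,a_k)$ for some $a_k\in\FF_2^n$; the constraints from the $\FF_2^{2n}\otimes E$-part, combined with the identity block of~\eqref{eq:newmtx}, then pin $a_k\equiv a_\ell\pmod{C^\perp}$ for every $k$. Minimizing weight over such representatives yields $|\tilde v|\ge 2\ell\cdot d(a_\ell,C^\perp)\ge \ell d_x$, with the bound saturated by the transverse logical $(a,a)\otimes(1,\ldots,1)$. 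On the $Z$-side, the $\FF_2^{2n}\otimes E$-summand of $\im\p_2$ supplies simultaneous shifts $(v_j,v_\ell)\mapsto(v_j+w,v_\ell+w)$ for arbitrary $w\in\FF_2^{2n}$; composing these with the individual $\im H_Z^T$-shifts of each $v_k$ lets me gauge-fix $v_j=0$ for every $j<\ell$ and reduce the problem to a single slab $v_\ell=(a,b)$ with $a+b\in C$, whose minimum weight is $d$. The main obstacle is to confirm that the all-ones bottom row of~\eqref{eq:newmtx} does not obstruct either gauge-fixing step; but the explicit shift formulas show that this row only couples $v_\ell$ to the other slabs through the global constraint $\sum_k v_k\in\ker H_X$, which was already forced by $\p_1\tilde v=0$, so the reduction goes through.
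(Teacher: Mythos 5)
Your overall route is the same as the paper's: the qubit count and locality come from the same case analysis of the product complex, the dimension from the (basis-invariant) K\"unneth computation, the $X$-distance from a slab-by-slab analysis of cocycles, and the $Z$-distance from multiplying by $Z$-stabilizers to collapse the $\FF_2^{2n}\otimes V$ part into a single slab. The locality and dimension parts are correct, and the $X$-side coordinate argument is a sound, more explicit version of the paper's cohomological K\"unneth argument; you should just note the easy converse that if $a_\ell\in C^\perp$ then every slab lies in $\im H_X^T$ and the whole cocycle is in $\im\p_1^T$, so that nontriviality really does force $d(a_\ell,C^\perp)\ge 1$ in your chain of inequalities.

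The one genuine gap is in the $Z$-distance lower bound. ``Gauge-fix $v_j=0$ for $j<\ell$ and compute the minimum weight of the single-slab canonical form'' does not by itself bound $d_z(Q')$: the distance is the minimum weight over \emph{all} representatives of a nontrivial class, and a priori a representative spread over many slabs and over the $\FF_2^m\otimes E$ block could be lighter than your canonical one. What makes the reduction legitimate is a weight comparison your sketch never states: for any $c=(x,y)$ with $\p_1 c=0$, writing $x=\sum_{k=1}^{\ell} x_k\otimes v_k$, one has $|c|\ge\sum_k|x_k|\ge\bigl|\sum_k x_k\bigr|$, and $\sum_k x_k$ is a nontrivial $Z$-logical operator of the base code (it lies in $\ker H_X$ by the slab-$\ell$ component of $\p_1c=0$, and if it had the form $(v,v)\in\im H_Z^T$ your canonical form would be a stabilizer, contradicting nontriviality of the class); hence $|c|\ge d_z(Q)=d$. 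This is precisely the paper's step $|c|=|x|+|y|\ge|\sum_i x_i|+|y|\ge d_z(Q)$. With that line added your argument closes (your own shift formulas even show the gauge-fixed $y$-part vanishes, since $\p_1c=0$ forces $y_j=H_Xx_j$ for $j<\ell$), but without it the plan only bounds the weight of the canonical representatives, not the code distance.
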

\begin{proof}
To show the dimension of the new code, we cite the K\"{u}nneth formula from algebraic topology. Define the $r$th homology group of a chain complex as $H_r(C) = \ker{\partial_r}/\im{\p_{r+1}}$, then the number of logical qubits of a quantum code is precisely $\dim(H_1(C))$. From the K\"{u}nneth formula, we have
\[
H_1(Q\times R) = (H_1(Q)\otimes H_0(R)) \oplus (H_0(Q)\otimes H_1(R)).
\]
Note that $\dim(H_1(Q)) = k, \dim(H_0(R)) = 1,$ and $\dim(H_1(R)) = 0$. Therefore $\dim(H_1(Q\times R)) = k$. 

For the $Z$ distance of $Q'$, let $c = (x, y)\in C_1$ such that $x\in \FF_2^{2n}\otimes V$, $y\in \FF_2^m\otimes E$, and $Z^{(x, y)}$ is a $Z$-logical operator. Let $v_1, \cdots, v_{\ell}$ be the standard basis vector of $V = \FF_2^{\ell}$, and $e_1, \cdots, e_{\ell-1}$ be the standard basis vector of $E = \FF_2^{\ell-1}$. Write $x = \sum_{i=1}^{\ell} x_i\otimes v_i$. We describe a simplification procedure that turns $x$ into the form $\ol{x}\otimes v_1$, such that $|\ol{x}|\le \sum_{i=1}^{\ell}|x_i|$. 

We begin with $x_{\ell}\otimes v_{\ell}$. Consider $\p_2(x_{\ell}\otimes e_{1}) = (x_{\ell}\otimes (v_1+v_{\ell}), (H_Xx_{\ell})\otimes e_1)$, then 
\[
(\sum_{i=1}^{\ell} x_i\otimes v_i, y) + \p_2(x_{\ell}\otimes e_{1}) = ((x_{1}+x_{\ell})\otimes v_{1} + \sum_{i=2}^{\ell-1} x_i\otimes v_i, y + (H_Xx_{\ell})\otimes e_1).
\]
In the stabilizer formalism, this step correspond to multiplying the logical operator $Z^{(x, y)}$ with the $Z$-stabilizer $Z^{\p_2(x_{\ell}\otimes e_{1})}$. Now for all other $i = 2, \cdots, \ell-1$, we multiply by the $Z$-stabilizer specified by $\p_2(x_i\otimes (e_{1} + e_i)) = (x_i\otimes (v_1+v_i), (H_Xx_i)\otimes (e_{1} + e_i))$ and the final vector in $C_1$ will have the form
\[
\ol{c} = ((\sum_{i=1}^{\ell} x_i)\otimes v_1, \ol{y})
\]
for some $\ol{y}\in \FF_2^n\otimes E$. Since $Z^{\ol{c}}$ is a valid logical operator, we must have $\p_1(\ol{c}) = 0$, which means
\[
H_X(\sum_{i=1}^{\ell} x_i)\otimes v_1 + (I\otimes H_{\ell})\ol{y} = 0.
\]
However, note that $v_1\notin \im(H_{\ell})$, which means for the above equation to hold we must have $\ol{y} = 0$ and $\sum_{i=1}^{\ell} x_i \in \ker{H_X}$. This implies that $Z^{\sum_{i=1}^{\ell} x_i}$ must be a $Z$-logical operator of $Q$. Therefore
\[
|c| = |x|+|y| \ge |\sum_{i=1}^{\ell} x_i| + |y| \ge d_z(Q),
\]
which means $d_z(Q') \ge d_z(Q)$. We also note that $d_z(Q') \le d_z(Q)$ since if $Z^x$ is a $Z$-logical operator of $Q$, then $Z^{x\otimes v_1}$ is a $Z$-logical operator of $Q'$. We conclude that $d_z(Q') = d_z(Q)$.

To discuss the $X$ distance, we define cohomologies and cite the K\"{u}nneth formula again. For a chain complex $C$, define the $r$th homology group of a chain complex as $H^r(C) = \ker{\partial_{r+1}^T}/\im{\p_{r}^T}$, and the K\"{u}nneth formula in this case states
\[
H^1(Q\times R) = (H^1(Q)\otimes H^0(R)) \oplus (H^0(Q)\otimes H^1(R)).
\]
Since $H^1(R) = 0$, we have $H^1(Q\times R) = (H^1(Q)\otimes H^0(R))$, which means any $X$-logical operator (which corresponds to a chain $c$ in $H^1(Q\times R)$) can be written in the form 
\[
c = x\otimes v + \p_1^T(u)
\]
where $x\in H^1(Q), v\in H^0(R),$ and $u\in \FF_2^m\otimes V$. Note that the $H^0(R)$ has dimension $1$, which means $v = \sum_{i=1}^{\ell} v_i$. Now suppose $u = \sum_{i=1}^{\ell} u_i\otimes v_i$, and consider the projection of $c$ onto the space $\FF_2^{2n}\otimes V$. It has the form
\[
c\mid_{\FF_2^{2n}\otimes V} = \sum_{i=1}^{\ell} (x + H_X^Tu_i)\otimes v_i.
\]
Since $\p_2^Tc = 0$, we must have $x + H_X^Tu_i\in H^1(Q)$ for all $i$, which means $|c|\ge \ell d_x(Q)$. This shows $d_x(Q') \ge \ell d_x(Q)$. We also note that $d_x(Q') \le \ell d_x(Q)$ since if $X^x$ is a $X$-logical operator of $Q$, then $X^{x\otimes (\sum_{i=1}^{\ell} v_i)}$ is a $X$-logical operator of $Q'$. We conclude that $d_x(Q') = \ell d_x(Q)$.

For the locality of $Q'$, we note that $H_X\otimes I$ and $H_Z^T\otimes I$ are matrices with constant row and column weights. For $I\otimes H_{\ell}$, exactly $1/\ell$ fraction of its rows have weight $\ell$, while the other rows have weight $1$, and all its columns have weight $2$. Let us enumerate the standard basis vector of $\FF_2^{2n}$ as $q_1, \cdots, q_{2n}$, and the standard basis vectors of $\FF_2^m$ as $c_1, \cdots, c_m$. Then we have
\begin{enumerate}
    \item All the $Z$-stabilizer generators corresponding to basis vectors of $\FF_2^{n}\otimes V$ have constant weight;
    
    \item All the $Z$-stabilizer generators corresponding to basis vectors of $\FF_2^{2n}\otimes E$ have constant weight, because $I\otimes H_{\ell}^T$ has constant row weight;
    
    \item The qubits corresponding to $\FF_2^{2n}\otimes V$ have the form $q_i\otimes v_j$ for $i\in[2n], j\in[\ell]$. Qubits of the form $q_i\otimes v_{\ell}$ are checked by $\ell$ $Z$-stabilizer generators from $\FF_2^{2n}\otimes E$. All other qubits have constant degree (in both $X$ and $Z$ stabilizer generator checks).
    
    \item The $X$-stabilizer generators corresponding to $\FF_2^m\otimes V$ has the form $c_i\otimes v_j$ for $i\in[m], j\in[\ell]$. Generators of the form $c_i\otimes v_{\ell}$ checks $\ell$ qubits from $\FF_2^m\otimes E$. All other $X$-stabilizer generators have constant weight.
\end{enumerate}
In short summary, at most $1/\ell$ fraction of the qubits are checked by $\Theta(\ell)$ $Z$-stabilizer generators, and exactly $1/\ell$ fraction of the $X$-stabilizer generators have $\Theta(\ell)$ weight. All other check weights and qubit degrees are constant.
\end{proof}

Finally, we prove a lower bound on soundness. Recall that our code $Q$ has soundness $2\rho$.

\begin{lemma}~\label{lem:newsound}
$Q'$ is locally testable with soundness $\min(\rho,1)/8$ for $Z$-operators, and soundness $1/3$ for $X$-operators. 
\end{lemma}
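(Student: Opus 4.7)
The plan is to invoke Lemma~\ref{lem:fact17} and reduce to showing the two classical component codes, $\ker(\p_1)$ (detecting $Z$-errors, with syndromes in $C_0$) and $\ker(\p_2^T)$ (detecting $X$-errors, with syndromes in $C_2$), are classical LTCs of soundnesses $\rho/8$ and $1/3$ respectively. In both cases, establishing the soundness inequality reduces to a syndrome-repair problem: for any syndrome $s$ in the image of the relevant boundary map $\p$, construct a small-weight $(\alpha,\beta) \in C_1$ with $\p(\alpha,\beta) = s$; by linearity this minimum weight equals $d((a',b'),\ker(\p))$.

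For the $Z$-soundness, expand $s = \sum_{j=1}^{\ell} s_j \otimes v_j \in \FF_2^m \otimes V$ and decompose $(\alpha,\beta)$ in the corresponding tensor basis. Unpacking $\p_1(\alpha,\beta) = (I \otimes H_\ell)\alpha + (H_X \otimes I)\beta$ yields the equations $\alpha_j + H_X\beta_j = s_j$ for $j < \ell$ and $\sum_{i<\ell}\alpha_i + H_X\beta_\ell = s_\ell$. A direct calculation shows $\sum_j s_j = H_X(\sum_j b'_j) \in \im(H_X)$ whenever $s$ is a valid syndrome. I then invoke the soundness $2\rho$ of $\ker(H_X) = \ker([H,H])$ from Claim~\ref{clm:dup} to find $\beta_\ell$ with $H_X\beta_\ell = \sum_j s_j$ and $|\beta_\ell| \le \frac{n}{m\rho}|s|$, and set $\alpha_j = s_j$, $\beta_j = 0$ for $j < \ell$. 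All equations are then satisfied and $d \le (1 + \frac{n}{m\rho})|s|$. Plugging this into the soundness inequality with $|C_0| = m\ell$ and $|C_1| = 2n\ell + m(\ell-1)$ yields soundness $\Omega(\rho)$; tracking constants for typical LDPC parameters ($m = O(n)$) gives at least $\rho/8$.

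For the $X$-soundness, decompose the syndrome as $(A,B)$ with $A_i \in \FF_2^{2n}$ for $i < \ell$ and $B_j \in \FF_2^n$ for $j \in [\ell]$. Unpacking $\p_2^T$ yields $A_i = H_X^T\alpha_i + \beta_i + \beta_\ell$ and $B_j = H_Z\beta_j$, along with the automatic consistency $H_Z A_i = B_i + B_\ell$ (from $(A,B) \in \im(\p_2^T)$ and $H_Z H_X^T = 0$). The crucial step uses the explicit forms $H_X = [H,H]$ and $H_Z = [I,I]$: every $\beta_j$ with $H_Z\beta_j = B_j$ has the form $\beta_j = (B_j + c_j, c_j)$ for some $c_j \in \FF_2^n$, and $\im(H_X^T) = \{(y,y) : y \in \im(H^T)\}$. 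Writing $A_i = (A_i^L, A_i^R)$, the consistency condition forces the two halves of the $\alpha_i$-equation to agree modulo $\im(H^T)$; choosing $c_\ell = 0$ and $c_i = A_i^R$ for $i < \ell$ collapses both halves to zero, so $\alpha = 0$ suffices. Summing weights yields $|\beta| \le |B_\ell| + \sum_{i<\ell}(|B_i| + 2|A_i^R|) \le 2|A| + |B| \le 2|s|$, hence $d \le 2|s|$. Combined with $|C_1| \ge 2n\ell$ and $|C_2| = n(3\ell - 2) < 3n\ell$, this yields soundness $\ge |C_1|/(2|C_2|) \ge 1/3$, independent of $\rho$.

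The main obstacle is the $X$-soundness step, where a priori the right-hand side of the $\alpha_i$-equation could land in $\ker(H_Z) \setminus \im(H_X^T)$, i.e., represent a non-trivial $Z$-logical class of the original code $Q$, and no choice of $\alpha_i$ alone could repair it without large modifications to $\beta$. The specific form $H_X = [H,H]$, $H_Z = [I,I]$ resolves this cleanly by exhibiting an explicit low-weight coset representative $c_i = A_i^R$. Geometrically, this is exactly the advantage of the star-graph $H_\ell$ over the line-graph version: each leaf-layer syndrome remains localized to a single vertex and is independently correctable, avoiding the $1/\ell$ loss in soundness that plagues the original Hastings-style balancing.
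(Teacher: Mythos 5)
Your proposal is correct, and it proves the lemma by a genuinely different route than the paper. You work with the equivalent ``syndrome-repair'' formulation of soundness: for every achievable syndrome, exhibit a low-weight preimage under $\p_1$ (resp.\ $\p_2^T$). The paper instead fixes an error $c\in C_1$, pushes it into a canonical form by multiplying with stabilizers ($(\ovl{x}\otimes v_1,\ovl{y})$ in the $Z$ case; $(\sum_i x_i\otimes v_i,0)$ with $x_\ell^1\cap x_\ell^2=0$ in the $X$ case), and then lower-bounds the syndrome weight directly: a case analysis gives $|\p_1(\ovl{c})|\ge \frac{1}{4}(|H_X\ovl{x}|+|\ovl{y}|)$ for the $Z$ side, and a coordinate-by-coordinate truth-table count gives $|\p_2^T(c)|\ge \frac{1}{2}\sum_i|x_i|$ for the $X$ side. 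Both arguments lean on the same structural facts --- the star-shaped $H_\ell$, the forms $H_Z=[I,I]$ and $H_X=[H,H]$, and the soundness $2\rho$ of $\ker[H,H]$ from Claim~\ref{clm:dup} --- but your explicit decoder-style repairs check out: for $Z$, the choice $\alpha_j=s_j$, $\beta_j=0$ ($j<\ell$) and $\beta_\ell$ obtained from the soundness of $H_X$ applied to $\sum_j s_j\in\im(H_X)$ satisfies all component equations; for $X$, the choice $\alpha=0$, $\beta_i=(B_i+A_i^R,A_i^R)$, $\beta_\ell=(B_\ell,0)$ reproduces the syndrome exactly because the consistency relation $H_ZA_i=B_i+B_\ell$ (forced by $H_ZH_X^T=0$) collapses the first halves, which is precisely the point you flag as the main obstacle. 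As for what each approach buys: your $Z$-side bound $d\le(1+\tfrac{n}{\rho m})|\p_1 c|$ actually yields soundness $\frac{2n\ell+m(\ell-1)}{m\ell+n\ell/\rho}\ge\rho/2$ for $\ell\ge2$ and $\rho\le1$ (an assumption the paper's own proof also makes), so your hedge about ``typical LDPC parameters $m=O(n)$'' is unnecessary and the constant is, if anything, better than $\rho/8$; the paper's weight-counting, on the other hand, stays closer to the boundary/coboundary-expansion intuition it emphasizes and does not need the (standard, but worth stating explicitly) equivalence between the soundness inequality and low-weight syndrome repair.
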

\begin{proof}
Given a $Z$-operator $Z^c$ where $c\in C_1$, we once again turn $c$ into the following form by multiplying with $Z$-stabilizers:
\[
\ol{c} = (\ol{x}\otimes v_1, \ol{y}).
\]
Let $z\in \FF_2^{2n}$ be such that $|z| = d(\ol{x}, \ker{H_X})$ and $\ol{x}+z\in \ker{H_X}$. Then we see that 
\[
\p_1(\ol{c}) = \p_1((z\otimes v_1, \ol{y})).
\]
Therefore, it suffices for us to show that 
\begin{equation}~\label{eq:zsoundness}
\frac{|\p_1(\ol{c})|}{m\ell} \ge \frac{\min(\rho,1)\cdot(|z| + |\ol{y}|)}{8(2n\ell + m(\ell-1))}. 
\end{equation}

Write $\ol{y} = \sum_{i=1}^{\ell-1}y_i\otimes e_i$. Then we have
\begin{align*}
\p_1(\ol{c}) 
&= (H_X\ol{x})\otimes v_1 + \sum_{i=1}^{\ell-1}y_i\otimes (H_{\ell}e_i) \\
&= (H_X\ol{x})\otimes v_1 + \sum_{i=1}^{\ell-1}y_i\otimes v_i + (\sum_{i=1}^{\ell-1}y_i)\otimes v_{\ell} \\
&= (H_X\ol{x})\otimes v_1 + y_1\otimes v_1 + \sum_{i=2}^{\ell-1}y_i\otimes v_i + y_1\otimes v_{\ell} +  (\sum_{i=2}^{\ell-1}y_i)\otimes v_{\ell}.
\end{align*}
Let $y' = \sum_{i=2}^{\ell-1}y_i$. For two vectors $u, v,$ let $u\cap v$ denote the vector where $(u\cap v)(i) = 1$ if and only if $u(i) = v(i) = 1$. Then we have
\begin{align*}
|\p_1(\ol{c})|
&= |H_X\ol{x}| + |y_1| - 2|H_X\ol{x}\cap y_1| + \sum_{i=2}^{\ell-1}|y_i| + |y_1| + |y'| - 2|y'\cap y_1| \\
&= |H_X\ol{x}| - 2|H_X\ol{x}\cap y_1| + |y'| + 2|y_1| - 2|y'\cap y_1| + \sum_{i=2}^{\ell-1}|y_i|.
\end{align*}
We first observe that since $|u|, |v| \ge |u\cap v|$ for any $u,v$, 
\begin{align*}
|\p_1(\ol{c})|
&\ge \sum_{i=2}^{\ell-1}|y_i| + |y_1| - |y'\cap y_1|, \\
|\p_1(\ol{c})|
&\ge |H_X\ol{x}|.
\end{align*}
The first inequality implies $|\p_1(\ol{c})| \ge |\ol{y}/3|$, for the following reason. If $|y_1| \le 2|\ol{y}|/3$, then $|\p_1(\ol{c})| \ge \sum_{i=2}^{\ell-1}|y_i| \ge |\ol{y}|/3$; otherwise if $|y_1| \ge 2|\ol{y}|/3$, then $|y'| < \sum_{i=2}^{\ell-1}|y_i| < |\ol{y}|/3$, which again implies $|\p_1(\ol{c})| \ge |\ol{y}|/3$. 

Therefore, if $|H_X\ol{x}| \le |\ol{y}|/3$, then by soundness of $H_X$, 
\begin{align*}
|\p_1(\ol{c})|
&\ge |\ol{y}|/3 \ge \frac{|\ol{y}| + |H_X\ol{x}|}{4} \ge \frac{|\ol{y}|}{4} + \frac{2\rho md(\ol{x}, \ker{H_X})}{4\cdot 2n}, \\
\frac{|\p_1(\ol{c})|}{m\ell} 
&\ge \frac{|\ol{y}|}{4m\ell} + \frac{\rho d(\ol{x}, \ker{H_X})}{4n\ell}.
\end{align*}
Assuming $8(\ell-1)\ge 4\ell$, we have equation~\eqref{eq:zsoundness}. Now if $|H_X\ol{x}| \ge |\ol{y}|/3$, we also have 
\[
|\p_1(\ol{c})|\ge |H_X\ol{x}| \ge \frac{|\ol{y}| + |H_X\ol{x}|}{4},
\]
and the same calculations as above follows. Therefore, $Q'$ is locally testable with soundness $\rho/8$ for $Z$-operators.

For $X$-operators $X^c$ where $c\in C_1$, suppose $c = (x, \sum_{j=1}^{\ell-1}y_j\otimes e_j)$. Note that since $\im{H_{\ell}^T} = \FF_2^{\ell-1} = E$, we can multiply $X^c$ by $X$-stabilizers of the form $X^r$, where
\[
r = \p_1^T (\sum_{j=1}^{\ell-1}y_j\otimes v_j) = (\sum_{j=1}^{\ell-1}H_X^Ty_j\otimes v_j, \sum_{j=1}^{\ell-1}y_j\otimes e_j). 
\]
Then $c + r = (\ol{x}, 0)$ for some $\ol{x}$. In other words, we may assume without loss of generality that all $X$-operators $X^c$ has the form 
\[
c = (\sum_{i=1}^{\ell} x_i\otimes v_i, 0).
\]
We will show that $|\p_2^T(c)| \ge (\sum_{i=1}^{\ell} |x_i|)/2$. To prove this, we need to use the fact that our $H_Z$ has the form $[I, I]$. For notation purposes, we let
\[
x_i = (x_i^1, x_i^2) = (x_i^1(1), \cdots, x_i^1(n), x_i^2(1), \cdots, x_i^2(n)) \in \FF_2^{2n}.
\]

Further, without loss of generality, we may assume $x_{\ell}^1\cap x_{\ell}^2 = 0$ because if $x_{\ell}^1\cap x_{\ell}^2 \ne 0$, we can add $((x_{\ell}^1\cap x_{\ell}^2, x_{\ell}^1\cap x_{\ell}^2) \otimes (\sum_{i=1}^{\ell} v_i), 0)$ to $c$. Note here that $\p_2^T((x_{\ell}^1\cap x_{\ell}^2, x_{\ell}^1\cap x_{\ell}^2) \otimes (\sum_{i=1}^{\ell} v_i), 0) = 0$. Now we are ready for the proof. 

For clarity purposes, in the following equations, we denote ``addition mod 2'' as $+_m$. We have
\begin{align*}
\p_2^T(c) 
&= \sum_{i=1}^{\ell}(x_i^1 +_m x_i^2)\otimes v_i +_m \sum_{i=1}^{\ell}(x_i^1, x_i^2)\otimes (H_{\ell}^T v_i) \\
&= \sum_{i=1}^{\ell}(x_i^1 +_m x_i^2)\otimes v_i +_m \sum_{i=1}^{\ell-1}(x_i^1, x_i^2)\otimes e_i +_m (x_{\ell}^1, x_{\ell}^2)\otimes (\sum_{i=1}^{\ell-1} e_i). 
\end{align*}
\begin{equation*}\label{eq:synw}
\begin{aligned}
|\p_2^T(c)|
&= \sum_{i=1}^{\ell}|x_i^1 +_m x_i^2| + \sum_{i=1}^{\ell-1}(|x_i^1| + |x_i^2|) + \sum_{i=1}^{\ell-1}(|x_{\ell}^1| + |x_{\ell}^2|) - 2\sum_{i=1}^{\ell-1} (|x_i^1\cap x_{\ell}^1| + |x_i^2\cap x_{\ell}^2|), \\
&= |x_{\ell}^1 +_m x_{\ell}^2| + \sum_{i=1}^{\ell - 1}\sum_{j=1}^n [ (x_i^1(j) +_m x_i^2(j)) + x_i^1(j) + x_i^2(j) + x_{\ell}^1(j) + x_{\ell}^2(j) \\
&\phantom{{} = |x_{\ell}^1 +_m x_{\ell}^2| + \sum\sum [}
- 2(x_i^1\cap x_{\ell}^1)(j) - 2(x_i^2\cap x_{\ell}^2)(j) ]. 
\end{aligned}  
\end{equation*}

We abbreviate the above equation into
\[
|\p_2^T(c)| = |x_{\ell}^1 +_m x_{\ell}^2| + \sum_{i=1}^{\ell - 1}\sum_{j=1}^n c_{i,j}.
\]
We now prove $|\p_2^T(c)| \ge \frac{1}{2}\sum_{i=1}^{\ell}\sum_{j=1}^n |x_i^{1}(j)| + |x_i^{2}(j)|$ by a counting argument. In particular, for any $i \le \ell-1,j\in [n]$, if at least one of $x_i^{1}(j)$, $x_i^{2}(j)$ is 1, then regardless of the values of $x_{\ell}^{1}(j)$, $x_{\ell}^{2}(j)$, we have $c_{i,j}\ge 1$. This can be seen from the following table. Note that by our assumptions, $x_{\ell}^{1}(j)$, $x_{\ell}^{2}(j)$ cannot both be 1. 
\begin{table}[H]
    \centering
    \begin{tabular}{|c|c|c|c|c|}
    \hline 
         $x_i^{1}(j)$ & $x_i^{2}(j)$ & $x_{\ell}^{1}(j)$ & $x_{\ell}^{1}(j)$ & $c_{i,j}$ \\
         \hline 
         1 & 0 & 0 & 0 & 2 \\
         1 & 0 & 1 & 0 & 1 \\
         1 & 0 & 0 & 1 & 3 \\
         1 & 1 & 0 & 0 & 2 \\
         1 & 1 & 1 & 0 & 1 \\
         1 & 1 & 0 & 1 & 1 \\
         0 & 1 & 0 & 0 & 2 \\
         0 & 1 & 1 & 0 & 3 \\
         0 & 1 & 0 & 1 & 1 \\
         \hline 
    \end{tabular}
    \caption{Table of possible values of the four variables and $c_{i,j}$.}
    \label{tab:values}
\end{table}
The only case that remains is $i = l$, and we observe that since $x_{\ell}^1\cap x_{\ell}^2 = 0$, $|x_{\ell}^1 + x_{\ell}^2| = |x_{\ell}^1| + |x_{\ell}^2|$. Combining this observation with the values in table~\ref{tab:values}, we see that 
\begin{align*}
|\p_2^T(c)| 
&\ge \frac{1}{2}\sum_{i=1}^{\ell}\sum_{j=1}^n |x_i^{1}(j)| + |x_i^{2}(j)| = (\sum_{i=1}^{\ell} |x_i|)/2   \\
\frac{|\p_2^T(c)| }{n\ell + 2n(\ell-1)}
&\ge \frac{\sum_{i=1}^{\ell} |x_i|}{2n\ell + 4n(\ell-1)} \ge \frac{1}{3}\frac{\sum_{i=1}^{\ell} |x_i|}{2n\ell + m(\ell-1)}.
\end{align*}
Therefore,  $Q'$ is locally testable with soundness $1/3$ for $X$-operators.
\end{proof}
Together, Lemma~\ref{lem:newpara} and~\ref{lem:newsound} gives Theorem~\ref{thm:newdistbala}.

\section*{Acknowledgements}
We thank Ted Yoder for the discussion on gauge fixing of the check product codes.

Z.H and A.N. thank Eugene Tang and the QLDPC reading group at MIT, especially Aram Harrow and Peter Shor for many helpful discussions. Z.H. also thanks Nikolas Breuckmann for many insightful conversations. Part of this work was done while Z.H. was at IBM T.J. Watson Research Center.

A.C. thanks Aram Harrow and the QLDPC reading group at MIT for helpful discussions. A.C and G.Z. are supported by the U.S. Department of Energy, Office of Science, National Quantum Information Science Research Centers, Co-design Center for Quantum Advantage (C2QA) under
contract number DE-SC0012704.

\sloppy
\printbibliography

@article{aharonov2015quantum,
author = {Aharonov, Dorit and Eldar, Lior},
title = {Quantum Locally Testable Codes},
journal = {SIAM Journal on Computing},
volume = {44},
number = {5},
pages = {1230-1262},
year = {2015},
doi = {10.1137/140975498},
}

@article{aharonov2013guest,
    author = {Aharonov, Dorit and Arad, Itai and Vidick, Thomas},
    title = {Guest column: the quantum {PCP} conjecture},
    year = {2013},
    issue_date = {June 2013},
    publisher = {Association for Computing Machinery},
    address = {New York, NY, USA},
    volume = {44},
    number = {2},
    issn = {0163-5700},
    doi = {10.1145/2491533.2491549},
    abstract = {The classical PCP theorem is arguably the most important achievement of classical complexity theory in the past quarter century. In recent years, researchers in quantum computational complexity have tried to identify approaches and develop tools that address the question: does a quantum version of the PCP theorem hold? The story of this study starts with classical complexity and takes unexpected turns providing fascinating vistas on the foundations of quantum mechanics and multipartite entanglement, topology and the so-called phenomenon of topological order, quantum error correction, information theory, and much more; it raises questions that touch upon some of the most fundamental issues at the heart of our understanding of quantum mechanics. At this point, the jury is still out as to whether or not such a theorem holds. This survey aims to provide a snapshot of the status in this ongoing story, tailored to a general theory-of-CS audience.},
    journal = {SIGACT News},
    month = jun,
    pages = {47–79},
    numpages = {33}
}

@inproceedings{anshu2022nlts,
    author = {Anshu, Anurag and Breuckmann, Nikolas P. and Nirkhe, Chinmay},
    title = {NLTS Hamiltonians from Good Quantum Codes},
    year = {2023},
    isbn = {9781450399135},
    publisher = {Association for Computing Machinery},
    address = {New York, NY, USA},
    doi = {10.1145/3564246.3585114},}

@article{arora1998proof,
    author = {Arora, Sanjeev and Lund, Carsten and Motwani, Rajeev and Sudan, Madhu and Szegedy, Mario},
    title = {Proof verification and the hardness of approximation problems},
    year = {1998},
    issue_date = {May 1998},
    publisher = {Association for Computing Machinery},
    address = {New York, NY, USA},
    volume = {45},
    number = {3},
    issn = {0004-5411},
    doi = {10.1145/278298.278306},
    }

@inproceedings{blum1990self,
author = {Blum, M. and Luby, M. and Rubinfeld, R.},
title = {Self-testing/correcting with applications to numerical problems},
year = {1990},
isbn = {0897913612},
publisher = {Association for Computing Machinery},
address = {New York, NY, USA},
url = {https://doi.org/10.1145/100216.100225},
doi = {10.1145/100216.100225},
booktitle = {Proceedings of the Twenty-Second Annual ACM Symposium on Theory of Computing},
pages = {73–83},
numpages = {11},
location = {Baltimore, Maryland, USA},
series = {STOC '90}
}

@article{breuckmann2021balanced,
  author={Breuckmann, Nikolas P. and Eberhardt, Jens N.},
  journal={IEEE Transactions on Information Theory}, 
  title={Balanced Product Quantum Codes}, 
  year={2021},
  volume={67},
  number={10},
  pages={6653-6674},
  keywords={Product codes;Parity check codes;Graph theory;Qubit;Quantum mechanics;Encoding;Tools;Quantum codes;quantum error-correction;quantum fault-tolerance},
  doi={10.1109/TIT.2021.3097347}}

@article{campbell2019theory,
doi = {10.1088/2058-9565/aafc8f},
url = {https://dx.doi.org/10.1088/2058-9565/aafc8f},
year = {2019},
month = {2},
publisher = {IOP Publishing},
volume = {4},
number = {2},
pages = {025006},
author = {Earl T Campbell},
title = {A theory of single-shot error correction for adversarial noise},
journal = {Quantum Science and Technology},
}

@article{dinur2007pcp,
author = {Dinur, Irit},
title = {The {PCP} theorem by gap amplification},
year = {2007},
issue_date = {June 2007},
publisher = {Association for Computing Machinery},
address = {New York, NY, USA},
volume = {54},
number = {3},
issn = {0004-5411},
url = {https://doi.org/10.1145/1236457.1236459},
doi = {10.1145/1236457.1236459},
journal = {J. ACM},
month = jun,
pages = {12–es},
numpages = {44},
keywords = {Gap amplification, PCP}
}

@inproceedings{dinur2022locally,
author = {Dinur, Irit and Evra, Shai and Livne, Ron and Lubotzky, Alexander and Mozes, Shahar},
title = {Locally testable codes with constant rate, distance, and locality},
year = {2022},
isbn = {9781450392648},
publisher = {Association for Computing Machinery},
address = {New York, NY, USA},
url = {https://doi.org/10.1145/3519935.3520024},
doi = {10.1145/3519935.3520024},
booktitle = {Proceedings of the 54th Annual ACM SIGACT Symposium on Theory of Computing},
pages = {357–374},
numpages = {18},
keywords = {locally testable codes, expander codes, error correcting codes},
location = {Rome, Italy},
series = {STOC 2022}
}

@inproceedings{dinur2022good,
author = {Dinur, Irit and Hsieh, Min-Hsiu and Lin, Ting-Chun and Vidick, Thomas},
title = {Good Quantum LDPC Codes with Linear Time Decoders},
year = {2023},
isbn = {9781450399135},
publisher = {Association for Computing Machinery},
address = {New York, NY, USA},
url = {https://doi.org/10.1145/3564246.3585101},
doi = {10.1145/3564246.3585101},
booktitle = {Proceedings of the 55th Annual ACM Symposium on Theory of Computing},
pages = {905–918},
numpages = {14},
keywords = {error-correcting codes, locally testable codes, quantum low-density parity-check codes},
location = {Orlando, FL, USA},
series = {STOC 2023}
}

@inproceedings{eldar2015local,
author = {L. Eldar and A. W. Harrow},
booktitle = {2017 IEEE 58th Annual Symposium on Foundations of Computer Science (FOCS)},
title = {Local Hamiltonians Whose Ground States Are Hard to Approximate},
year = {2017},
volume = {},
issn = {0272-5428},
pages = {427-438},
abstract = {Ground states of local Hamiltonians can be generally highly entangled: any quantum circuit that generates them, even approximately, must be sufficiently deep to allow coupling (entanglement) between any pair of qubits. Until now this property was not known to be “robust” - the marginals of such states to a subset of the qubits containing all but a small constant fraction of them may be only locally entangled, and hence approximable by shallow quantum circuits. In this work we construct a family of 16-local Hamiltonians for which any marginal of a ground state to a fraction at least 1-10-8 of the qubits must be globally entangled. This provides evidence that quantum entanglement is not very fragile, and perhaps our intuition about its instability is an artifact of considering local Hamiltonians which are not only local but spatially local. Formally, it provides positive evidence for two wide-open conjectures in condensed-matter physics and quantum complexity theory which are the qLDPC conjecture, positing the existence of “good” quantum LDPC codes, and the NLTS conjecture [1] positing the existence of local Hamiltonians in which any low-energy state is highly entangled. Our Hamiltonian is based on applying the hypergraph product by Tillich-Zemor [2] to the repetition code with checks from an expander graph. A key tool in our proof is a new lower bound on the vertex expansion of the output of low-depth quantum circuits, which may be of independent interest.},
keywords = {quantum entanglement;stationary state;robustness;complexity theory;graph theory},
doi = {10.1109/FOCS.2017.46},
url = {https://doi.ieeecomputersociety.org/10.1109/FOCS.2017.46},
publisher = {IEEE Computer Society},
address = {Los Alamitos, CA, USA},
month = {10}
}

@article{freedman2013quantum,
author = {Freedman, Michael and Hastings, Matthew},
year = {2013},
month = {01},
pages = {},
title = {Quantum Systems on Non-$k$-Hyperfinite Complexes: A Generalization of Classical Statistical Mechanics on Expander Graphs},
volume = {14},
journal = {Quantum Information and Computation},
doi = {10.26421/QIC14.1-2-9}
}

@inproceedings{hastings2021fiber,
author = {Hastings, Matthew B. and Haah, Jeongwan and O'Donnell, Ryan},
title = {Fiber bundle codes: breaking the $n^{1/2}$ polylog(n) barrier for Quantum LDPC codes},
year = {2021},
isbn = {9781450380539},
publisher = {Association for Computing Machinery},
address = {New York, NY, USA},
url = {https://doi.org/10.1145/3406325.3451005},
doi = {10.1145/3406325.3451005},
abstract = {We present a quantum LDPC code family that has distance Ω(N3/5/polylog(N)) and Θ(N3/5) logical qubits, where N is the code length. This is the first quantum LDPC code construction that achieves distance greater than N1/2 polylog(N). The construction is based on generalizing the homological product of codes to a fiber bundle.},
booktitle = {Proceedings of the 53rd Annual ACM SIGACT Symposium on Theory of Computing},
pages = {1276–1288},
numpages = {13},
keywords = {expanders, fiber bundle, homological product, quantum LDPC codes},
location = {Virtual, Italy},
series = {STOC 2021}
}

@article{hastings2016weight,
    author = "Hastings, Mathew B.",
    title = "{Weight reduction for quantum codes}",
    doi = "10.26421/QIC17.15-16-4",
    journal = "Quant. Inf. Comput.",
    volume = "17",
    number = "15-16",
    pages = "1307--1334",
    year = "2017"
}

@inproceedings{hastings2016quantum,
  author =	{Hastings, Matthew B.},
  title =	{{Quantum Codes from High-Dimensional Manifolds}},
  booktitle =	{8th Innovations in Theoretical Computer Science Conference (ITCS 2017)},
  pages =	{25:1--25:26},
  series =	{Leibniz International Proceedings in Informatics (LIPIcs)},
  ISBN =	{978-3-95977-029-3},
  ISSN =	{1868-8969},
  year =	{2017},
  volume =	{67},
  editor =	{Papadimitriou, Christos H.},
  publisher =	{Schloss Dagstuhl -- Leibniz-Zentrum f{\"u}r Informatik},
  address =	{Dagstuhl, Germany},
  URL =		{https://drops.dagstuhl.de/entities/document/10.4230/LIPIcs.ITCS.2017.25},
  URN =		{urn:nbn:de:0030-drops-81708},
  doi =		{10.4230/LIPIcs.ITCS.2017.25},
  annote =	{Keywords: quantum codes, random lattices, Rankin invariants}
}

@inproceedings{leverrier2022quantum,
author = {A. Leverrier and G. Zemor},
booktitle = {2022 IEEE 63rd Annual Symposium on Foundations of Computer Science (FOCS)},
title = {Quantum Tanner codes},
year = {2022},
volume = {},
issn = {},
pages = {872-883},
abstract = {Tanner codes are long error correcting codes obtained from short codes and a graph, with bits on the edges and parity-check constraints from the short codes enforced at the vertices of the graph. Combining good short codes together with a spectral expander graph yields the celebrated expander codes of Sipser and Spielman, which are asymptotically good classical LDPC codes. In this work we apply this prescription to the left-right Cayley complex that lies at the heart of the recent construction of a c3 locally testable code by Dinur et at. Specifically, we view this complex as two graphs that share the same set of edges. By defining a Tanner code on each of those graphs we obtain two classical codes that together define a quantum code. This construction can be seen as a simplified variant of the Panteleev and Kalachev asymptotically good quantum LDPC code, with improved estimates for its minimum distance. This quantum code is closely related to the Dinur et at. code in more than one sense: indeed, we prove a theorem that simultaneously gives a linearly growing minimum distance for the quantum code and recovers the local testability of the Dinur et at. code.},
keywords = {heart;computer science;codes;quantum computing;quantum mechanics;parity check codes;graph theory},
doi = {10.1109/FOCS54457.2022.00117},
url = {https://doi.ieeecomputersociety.org/10.1109/FOCS54457.2022.00117},
publisher = {IEEE Computer Society},
address = {Los Alamitos, CA, USA},
month = {11}
}

@article{leverrier2022towards,
  doi = {10.22331/q-2022-02-24-661},
  url = {https://doi.org/10.22331/q-2022-02-24-661},
  title = {Towards local testability for quantum coding},
  author = {Leverrier, Anthony and Londe, Vivien and Z{\'{e}}mor, Gilles},
  journal = {{Quantum}},
  issn = {2521-327X},
  publisher = {{Verein zur F{\"{o}}rderung des Open Access Publizierens in den Quantenwissenschaften}},
  volume = {6},
  pages = {661},
  month = feb,
  year = {2022}
}

@article{panteleev2021quantum,
  author={Panteleev, Pavel and Kalachev, Gleb},
  journal={IEEE Transactions on Information Theory}, 
  title={Quantum LDPC Codes With Almost Linear Minimum Distance}, 
  year={2022},
  volume={68},
  number={1},
  pages={213-229},
  keywords={Codes;Parity check codes;Product codes;Graph theory;Quantum mechanics;Codecs;Sparse matrices;CSS code;quantum LDPC;quasi-cyclic (QC) LDPC;hypergraph product code;chain complex},
  doi={10.1109/TIT.2021.3119384}
}

@inproceedings{panteleev2022asymptotically,
author = {Panteleev, Pavel and Kalachev, Gleb},
title = {Asymptotically good Quantum and locally testable classical LDPC codes},
year = {2022},
isbn = {9781450392648},
publisher = {Association for Computing Machinery},
address = {New York, NY, USA},
url = {https://doi.org/10.1145/3519935.3520017},
doi = {10.1145/3519935.3520017},
abstract = {We study classical and quantum LDPC codes of constant rate obtained by the lifted product construction over non-abelian groups. We show that the obtained families of quantum LDPC codes are asymptotically good, which proves the qLDPC conjecture. Moreover, we show that the produced classical LDPC codes are also asymptotically good and locally testable with constant query and soundness parameters, which proves a well-known conjecture in the field of locally testable codes.},
booktitle = {Proceedings of the 54th Annual ACM SIGACT Symposium on Theory of Computing},
pages = {375–388},
numpages = {14},
keywords = {quantum codes, locally testable codes, expander graphs, chain complexes, LDPC codes},
location = {Rome, Italy},
series = {STOC 2022}
}

\section*{Appendix}
\paragraph{Omitted Proofs in Section~\ref{sec:cp}}

\CoroDistTwo*
\begin{proof}
We note that the soundness follows from Lemma~\ref{lem:fact17}, and the LDPC property follows from the fact that both of its component codes are LDPC. The rate can be calculated from the number of checks: there are $2n$ qubits and $2(1-r)n$ checks, which gives rate $r$. 

The quantum distance can be seen from the following fact. Consider $Z^{(\one_v, \one_v)}$ and $X^{(\one_v, \one_v)}$ for $v\in B$. If $Z^{(\one_v, \one_v)}$ is in the stabilizer group for all $v\in B$, then $H$ must be a matrix of rank $n$. Equivalently, $H$ can be transformed into the $n\times n$ identity matrix $I$ by row operations, which means the original code $C$ is trivial. Therefore there exists some $v$ such that $Z^{(\one_v, \one_v)}$ and $X^{(\one_v, \one_v)}$ are both logical operators.
\end{proof}

\CheckProductProperty*
\begin{proof}

To prove the first claim, we see that the row space of $H_1\otimes H_2$ is exactly $C_1^{\perp}\otimes C_2^{\perp}$. Therefore, the dual of its row space is exactly the dual tensor code $C_1\otimes \FF_2^{n_2} + \FF_2^{n_1} \otimes C_2$. The rate of the new code is also easy to compute, as the number of linearly independent checks is simply $(n_1-k_1)(n_2-k_2)$.

\medskip

To argue about the distance of $C_1\star C_2$ (item {\it 2.}) is not hard either. 
Let $0\neq x \in C_1\star C_2$, and imagine 
$x$ as an $n_{1}\times n_{2}$
matrix $M$ over 
$\FF_2$.
We will have shown that $|x|\ge \min(d_{1}d_{2})$ if we find 
any $v\in \FF_2^{n_1}$ or
$w\in \FF_2^{n_2}$ such that
$|v^{T}M|\ge d_{2}$ or 
$|Mw|\ge d_{1}$ ($|.|$ is the 
Hamming weight).

Let us now run $v$ through all checks of 
$C_{1}$ (i.e. rows of $H_{1}$),
and $w$ through all checks of 
$C_{2}$ (i.e. rows of $H_{2}$).
There are two possibilities.

\begin{enumerate}
    \item Either all of the above 
    matrix-vector products give the zero vector. In that case $x$ is not only in $C_1\star C_2$, but also in the smaller
    $C_1\otimes C_2$, and therefore
    its distance from the zero 
    vector is at least
    $d_{1}d_{2}\ge \min(d_{1},d_{2})$.
    \item One of the above 
    matrix-vector products is non-zero.
    Notice that for any $v$ that is a
    $C_{1}$-check (any $w$
    that is a a $C_{2}$-check), we have
    $v^{T}M\in C_2$
    ($Mw\in C_1$), by definition, since $M$ represents an 
    $x\in C_1\star C_2$, and now we get a $\min(d_{1}d_{2})$
    bound for that reason.
\end{enumerate}
This proves that the distance is at least $\min(d_1, d_2)$. To prove equality, take $x\in C_2$ and $y\in C_1$ with minimum distances ($d_{2}$ and $d_{1}$, respectively) from zero. Then $(\underbrace{1, 0, \cdots, 0}_{n_1})\otimes x$ and $y\otimes (\underbrace{1, 0, \cdots, 0}_{n_2})$ are both valid codewords of $C_1\star C_2 = C_1\otimes \FF_2^{n_2} + \FF_2^{n_1} \otimes C_2$. This concludes the proof of our claims. \qedhere
\end{proof}

\CPDistance*
\begin{proof}
Suppose $Q$ has $n_q$ physical qubits. We have from Lemma~\ref{lem:cpprop}
\begin{align*}
d_x(Q\star C) 
&= \min \{|w|: w\in C_X\star C\sm (C_Z\star C)^{\perp}\} \\
&= \min \{|w|: w\in C_X\otimes\FF_2^n + \FF_2^{n_q}\otimes C \sm C_Z^{\perp}\otimes C^{\perp}\}.
\end{align*}
Since $C^{\perp}\ne \FF_2^{n}$, we can find $x\in C_X, |x| = d(C_X)$, and a standard basis vector $e_i\in\FF_2^{n}\sm C^{\perp}$ such that $x\otimes e_i\in C_X\otimes\FF_2^n \sm C_Z^{\perp}\otimes C^{\perp}$. Similarly, since $C_Z^{\perp}\ne \FF_2^{n_q}$, we can find $y\in C, |y| = d(C)$, and a standard basis vector $e_j\in\FF_2^{n_q}\sm C_Z^{\perp}$ such that $e_j\otimes y\in \FF_2^{n_q}\otimes C \sm C_Z^{\perp}\otimes C^{\perp}$. This shows that $d_x(Q\star C) \le \min(d(C_X), d(C))$. On the other hand, we note that from Fact~3 of Lemma~\ref{lem:cpprop},
\begin{align*}
    d_x(Q\star C) 
    &= \min \{|w|: w\in C_X\star C\sm (C_Z\star C)^{\perp}\} \\
    &\ge \min \{|w|: w\in C_X\star C\} = \min(d(C_X), d(C)).
\end{align*}
Thus $d_x(Q\star C) = \min(d(C_X), d(C))$. Similarly, we have $d_z(Q\star C) = \min(d(C_Z), d(C))$. This lemma then follows.
\end{proof}

\paragraph{Omitted Proofs in Section~\ref{sec:cpofLTCs}}
\SimpleForm*
\begin{proof}
Without loss of generality we assume
$\Pi = I_{n}$, since $\Pi$ simply permute the bits of the code without changing any of its parameters.
Given $x\in \FF_2^n$, it suffices for us to show that $\frac{n}{m}
d(x, C)/n \le |H'x|/m$. Denote the rows of $H'$ as $r_1, \cdots, r_m$, then $|H'x| = \sum_{i=1}^m r_i\cdot x$. Now suppose $r_1\cdot x = 1$, then let $x' = x + e_1$, where $e_1$ is the standard basis vector, and we see that $|H'x'| = |H'x| - 1$. We may now repeat this procedure for all rows that give syndrome-bit 1 with respect to $x$, until in 
$|H'x|$ steps we arrive at a code word. Finally we 
notice that $n/m = 1/r$.
More explicitly, let $S\subset [m]$ be the indices of rows 
violated by $x$. Define
\[
y = x + \sum_{i\in S}e_i.
\]
Then $|H'y| = 0$, since $H'x = H'(\sum_{i\in S}e_i)$. Thus, 
$d(x, C) \le d(x, y) = |S| = |H'x|$, which proves our claim.
\end{proof}

\CPTestability*
\begin{proof}
The check matrix of $C_X\star C$
is $H_X\otimes H$, and since the latter
has row and column weight bounded by $ww_X$, the bound on the 
locality
of $C_X\star C$ follows. The more interesting part is to show that $C_X\star C$ has soundness parameter at least $\rho$. 

By definition, every check 
of $C_X\star C$ is a tensor product of a check for
$C_X$ and a check for $C$.
The $i^{\rm th}$ check of $C_X$
contains the $i^{\rm th}$ bit of the checked word and some other bits
that have index beyond $m_{X}$, due to the 
$[I_{m_X}\mid h_X]$ structure of $H_{X}$. 
Given a word $x\in \FF_2^{n_x\times n}$ to be checked, we write it as $x = (x_1, \cdots, x_{n_{X}})$ where each $x_i\in \FF_2^{n}$. 
Fix $1\le i\le m_{X}$, and consider the collection 
$\Gamma_{i}$ of all those checks of $C_X\star C$ 
that have the $i^{\rm th}$ 
check of $C_X$ as their first component.
Then $|\Gamma_{i}|$ is exactly the number $t$ of all checks 
for $C$ (hence $t$ is independent of $i$).
Define
\[
y_{i} = \sum_{j\in \; i^{\rm th} \; {\rm check}\;
{\rm of} \; C_X} x_{j} \;\;\;\;\;\;\;\;\;\;\;
\left(\in\, \FF_2^{n}\right)
\]
(in particular, $j=i$ is one of the participant indices on the r.h.s.). 
Let $\Gamma_{i}'\subseteq \Gamma_{i}$ 
be the checks in $\Gamma_{i}$ that 
$x$ violates. 
These violated checks, 
due to the tensor product construction, correspond to
those checks of $C$ that $y_{i}$ violates.
Therefore, due to the soundness parameter
$\rho$ of $C$
there exists a $y_{i}'\in C$
such that $\rho |y_{i}-y_{i}'| \le \frac{n}{t}|\Gamma_{i}'|$.
Adding now $y_{i}'-y_{i}$ to $x_{i}$, while keeping all 
$x_{j}$s for $j\neq i$ the same, we have achieved that 
all checks in $\Gamma_{i}$ go through, and also we 
did not affect those checks that are not in 
$\Gamma_{i}$, since their $H_{X}$ component does 
not contain the $i^{\rm th}$ bit of $C_{X}$.
After having the above done for all $1\le i\le m_{X}$
we get a word $x'$ that passes all tests, therefore 
belongs to $C_X\star C$. Further,
\begin{equation}\label{eq:rh}
\rho |x'-x| \; \le \; 
\sum_{i=1}^{m_{X}} \, \frac{n}{t}\; |\Gamma_{i}'|
\; = \; \frac{n}{t}\;\sum_{i=1}^{m_{X}} \, |\Gamma_{i}'|
\; = \; \frac{n}{t} s
\end{equation}
where $s$ is the number of checks violated by $x$. 
Relating $s$ to $m_{X} t$, the number of all 
checks, we get from (\ref{eq:rh}) that
\[
\underbrace{\frac{s}{m_{X} t}}_{\rm probability \; of \; failed \; check}
\; \ge \;
\frac{|x'-x|}{m_{X}\cdot n}
\cdot \rho
\; = \; \underbrace{\frac{|x'-x|}{n_{X}\cdot n}}_{
\begin{array}{l}
\scaleto{\rm relative \; distance \; of}{6pt}\\[-4pt]
\scaleto{\rm of \; {\it x} \; from \; a \; code \; word}{6pt}
\end{array}
}
\cdot \underbrace{\frac{\rho \, n_X}{m_X}}_{\rm soundness}
\]
as needed.
\end{proof}

\CPQLTC*
\begin{proof}
Local testability with soundness $\rho\cdot\min(\frac{n_q}{m_X}, \frac{n_q}{m_Z})$ is proved by Lemma~\ref{lem:cpCLTC},
by applying it to both $C_{X}\star C$
and $C_{Z}\star C$ (where the check sets associated with $C_{X}$ and $C_{Z}$ are $\ol{H}_X$ and $\ol{H}_Z)$) and taking the minimum. That the minimum of the respective soundness of the two classical parts is a lower bound on the soundness of the 
CSS code is proven in Lemma \ref{lem:fact17} (originally Fact 17 in \cite{eldar2015local}).
The locality bounds also follow from Lemma~\ref{lem:cpCLTC}, where
we use the trivial $n_q$ upper bound on the row and column weights of 
$\ol{H}_X$ and $\ol{H}_Z)$.
Since $Q$ and $\ol{Q}$ are the same code, 
they have the same rate and distance, and therefore the distance of $\ol{Q}\star C$ is easily given by Lemma~\ref{lem:cpprop}. The dimension can be calculated as follows: the number of stabilizer generators in $\ol{Q}$ is $n_q - k_q$, and therefore the number of stabilizer generators in $\ol{Q}\star C$ is $(n-k)(n_q-k_q)$. 
\end{proof}





\RandomQuantumCodes*
\begin{proof}
First recall that in our construction $C\le \mathbb{F}_{2}^{n}$ is a random code with dimension $\frac{3n}{4}$ and $D\le C$ is a random subspace of $C$ of dimension
$\frac{n}{4}$. Due to the above parameters
the quantum CSS code made from classical codes $C$ and $D^{\perp}$ 
encodes $\frac{3}{4}n - \frac{1}{2}n$
qubits, therefore it has rate $1/2$.

\medskip

Next we will show
that both $C$ and $D^{\perp}$
have linear distances with high probability.
Let $\delta$ be such that $H(\delta) = 0.25$, where
$H(\delta) = \delta \log_{2}\frac{1}{\delta} + (1-\delta) \log_{2}\frac{1}{1-\delta}$ is 
the entropy function. We show that with probability very close to one
both $C$ and $D^{\perp}$ have minimum distance at least $\delta' n$, where $\delta'\rightarrow \delta$ (from below) as $n$ 
tends to infinity.

That $C$ has the above distance is simply the Gilbert-Varshamov (GV) bound  for linear codes. This says that
if we want to achieve relative distance $\delta'$, then a random linear code with rate $1-H(\delta') - \epsilon$ will achieve that, where $\epsilon$
is arbitrary small as $n$ grows. 

The proof of the GV bound is simple, and the code construction that leads to it,
telling the proof with our parameters for simplicity,
is the following: We select a random 
parity check matrix, $M$ with $\frac{n}{4}$ rows and $n$ colums. The minimum distance of $C$ is 
the minimum number of columns of $M$ that sum to zero. To argue for relative distance $\delta'$
slightly below $\delta$,
we use the well-known fact, that
\[
\log_{2} \binom{n}{\delta' n}= 
 H(\delta)\,
 (n -\omega(\log n))\;\;\;\;\;
{\rm when} \;\; \delta' \rightarrow \delta \;\; {\rm appropriately, \; as} \;\; n\rightarrow\infty
\]
The number of ways 
to select $1\le k \le \delta' n$ columns is then $\le \delta' n\cdot 2^{(H(\delta)\,
(n-\omega(\log n))}
= o
\left(2^{H(\delta)n}
\right)$, for large $n$. The probability that any select $k$
columns (no restriction on $k$) of a random matrix with $\frac{1}{4} n$ rows add up to zero is $2^{- \frac{1}{4} n}$, and we can simply finish the proof by the union bound, since
$o
\left(2^{H(\delta)n}
\right) \cdot 2^{- \frac{1}{4} n} =
o(1)$.

Let us now estimate the distance of $D^{\perp}$ in a similar way. The dimension of the linear space $D^{\perp}$ is $n-\dim D = \frac{3n}{4}$, but 
at this time we have the restriction that $D$ must be a subspace of $C$. The way to create a random subspace of $C$
of dimension $\frac{n}{4}$ is to add $\frac{n}{2}$ extra rows to the parity check matrix of $C$.
Therefore, to continue the construction, we select an additional  $\frac{1}{2}n$ rows. The added new rows constitute a matrix $N$ of dimensions $\frac{1}{2} n\times n$. When we concatenate $N$ and
$M$ (put them together into an $n\times \frac{3}{4}n$ matrix,
we get a matrix
$W = \left(\begin{array}{l} N \\\hline M\end{array}\right)$. Here $N$ and $M$ are random, and the rows of $W$ generate $D^{\perp}$. 
We calculate the probability that the rows of $W$ generate any vector of weight less than $\delta' n$, where $\delta'$
is appropriately, but only slightly less, than $\delta$.

The number of vectors with weight $\delta' n$ is upper bounded by $o
\left(2^{H(\delta)n}
\right)$ similarly to our previous 
calculation. The probability that any fixed 
select rows of $W$ (over the randomness of $W$) add up to a any particular vector $w$ is $2^{-n}$. The number of ways we can 
select a subset of rows of $w$ is $2^{ 0.75 n}$. Thus:
\begin{eqnarray*}
{\rm Prob}_{W}(\mbox{there are rows of $W$ that add up to some $w$ with weight $\le \delta n$}) \; \\
\le  \;
o
\left(2^{H(\delta)n}
\right) \cdot
2^{ 0.75 n} \cdot 2^{-n}
= o(1) \hspace{1.5in}
\end{eqnarray*}
 Thus, the probability that the minimum distance of 
$D^{\perp}$ is at least $\delta n$ is arbitrarily close to 1 when $n$ is sufficiently large.

By the union bound (at this time applied only on two terms), the probability that the minimum relative distance of either of $C$ or $D^{\perp}$ fails to be at least $\delta'$,
for a randomly chosen $W= \left(\begin{array}{l} N \\\hline M\end{array}\right)$ becomes arbitrarily small, as $n$ goes to infinity.
\end{proof}

\end{document}